%% file: submission_arXiv.tex
\documentclass[conference,a4paper]{IEEEtran}
\IEEEoverridecommandlockouts
\usepackage[a4paper,left=16.5mm,right=16.5mm,top=19mm,bottom=44mm]{geometry}
\usepackage{amsmath,amsfonts}
\usepackage{array}
\usepackage[caption=false]{subfig}
\usepackage{textcomp}
\usepackage{stfloats}
\usepackage{url}
\usepackage{verbatim}
\usepackage{graphicx}
\usepackage{cite}
\usepackage{wrapfig}

\usepackage[utf8]{inputenc}
\usepackage[T1]{fontenc}
\usepackage[bookmarks=false]{hyperref}
\usepackage{amssymb}
\usepackage{amsthm}
\usepackage{enumerate}
\usepackage[inline]{enumitem}
\usepackage{tikz}
\usepackage{booktabs}
\usepackage{tabularx}
\usepackage{pgfplots}
\usetikzlibrary{positioning,shapes.misc,shapes.geometric,patterns,spy,calc}
\usepgfplotslibrary{fillbetween}
\pgfplotsset{compat=newest}
\usepackage[ruled,vlined]{algorithm2e}
\usepackage[capitalize]{cleveref}
\usepackage[normalem]{ulem}
\usepackage{balance}
\usepackage{float}
\usepackage{comment}
\usepackage[nolist,nohyperlinks]{acronym}
\acrodef{ldp}[LDP]{local differential privacy}
\acrodef{dp}[DP]{differential privacy}
\acrodef{ddp}[DDP]{dependent differential privacy}
\acrodef{3r}[3R]{Three-Region}
\acrodef{mq}[MQ]{Markov-Quilt}
\acrodef{mdp}[MDP]{Markov Decision Process}
\acrodef{lrm}[LRM]{Local Redaction Mechanism}
\acrodef{iid}[iid]{independent and identically distributed}

\newtheorem{definition}{Definition}

\newtheorem{lemma}{Lemma}

\newtheorem{corollary}{Corollary}
\newtheorem{theorem}{Theorem}

\input{macros}

\tikzset{
    mdp/.style={color=olive},
    3r-grid/.style={color=bleudefrance,dashdotted},
    3r-exact/.style={color=red,dashed},
    3r-analytic/.style={color=red,densely dotted},
    mq/.style={color=black}
}

\begin{document}

\title{%
    The Right to Be Forgotten under Correlation: Privacy–Utility Tradeoffs
    \thanks{\textbf{Emails:} \{luis.massny,rawad.bitar\}@tum.de, fangweiye@nuaa.edu.cn, salim.elrouayheb@rutgers.edu}
    \thanks{R. Bitar is supported by the German Research Foundation (DFG) grant agreement number BI 2492/1-1. The work of S. El Rouayheb  was supported in part by the Rutgers Global International Collaborative Research Grant.}
}

\author{%
    \IEEEauthorblockN{Luis Massny$^*$, Rawad Bitar$^*$, Fangwei Ye$^\dagger$, Salim El Rouayheb$^\ddagger$}
    \IEEEauthorblockA{
        $*$ \textit{School of Computation, Information and Technology, Technical University of Munich}, Munich, Germany\\
        $\dagger$ \textit{College of Computer Science and Technology, Nanjing University of Aeronautics and Astronautics}, Nanjing, China \\
        $\ddagger$ \textit{ECE Department, Rutgers University}, New Brunswick, NJ, USA
    }
}

\maketitle

\begin{abstract}
  We study privacy-preserving algorithms that redact information from a correlated dataset. The goal is to reduce the information leakage about sensitive parts.
  We analytically characterize the optimal tradeoff between privacy (leakage through correlation) and utility (number of redactions) for local mechanisms and binary Markovian correlations.
  Our result shows how to decompose the mechanism design problem into a sequence of local mechanism design problems and reveals their sequential relation.
  The optimal mechanism is the solution to a Finite-Horizon Deterministic Markov Decision Process.
\end{abstract}

\begin{IEEEkeywords}
Differential Privacy, Redaction, Markov Decision Process
\end{IEEEkeywords}

\section{Introduction}
\IEEEPARstart{T}{he} right to be forgotten is a central promise of recent privacy protection legislation such as the European Union's General Data Protection Regulation (GDPR)
and the California Consumer Privacy Act (CCPA).
Data ecosystems~\cite{google2022openDataEcosystem,aimeur2022data} that collect valuable datasets for machine learning training and big data analysis can pose a threat to the privacy, and thus, must implement technologies that comply with these regulations.

When data are correlated, however, fulfilling the promise of being forgotten can be challenging as only redacting the data samples flagged for removal is often not enough. Data correlation emerges in various settings: information or data sequences that evolve over time, such as
activity data~\cite{zufferey2023watchYourWatch},
location traces~\cite{karatzoglou2018markovTrajectories,liuLocationPrivacyIts2018},
financial transactions~\cite{chen2012markovTestFinancial},
or energy demands in smart metering~\cite{torriti2014electricityDemandModel,li2018smartMetering,sankar2013smartMeterPrivacy}, are naturally correlated through their temporal evolution. Furthermore, data from separate individuals connected through social relationships are naturally correlated by these connections~\cite {lucaHowSocialRelationships2010,bonhardKnowingMeKnowing2006}.
Also, data used downstream in a learning algorithm can make intermediate results depend on it, creating correlation and potentially necessitating an unlearning~\cite{bourtoule2021machineUnlearning,yinzhi2015makingSystemsForget} for its information to be removed.
In this work, we advocate for a stronger notion of the right to be forgotten, one that takes a holistic view of privacy by accounting for correlations across the dataset. Under this notion, simply removing the requested data sample is not sufficient; one must also limit the information about that sample that can still be inferred from the remaining correlated data.

\Ac{dp}~\cite{dworkDifferentialPrivacy2006} and \ac{ldp}~\cite{duchiLocalPrivacyStatistical2013}
do not cover correlated datasets~\cite{kiferNoFreeLunch2011,liuDependenceMakesYou2016}.
Therefore, several alternative information leakage measures~\cite{kiferPufferfishFrameworkMathematical2014,yangBayesianDifferentialPrivacy2015,zhaoDependentDifferentialPrivacy2017,ghoshInferentialPrivacyGuarantees2017,wang2025inferentiallyPrivate} and adapted \ac{dp} algorithms~\cite{kiferNoFreeLunch2011,chenCorrelatedNetworkData2014,xiaoProtectingLocationsDifferential2015,zhuCorrelatedDifferentialPrivacy2015,liuDependenceMakesYou2016,songPufferfishPrivacyMechanisms2017a,songCompositionInferentialPrivacy2017,chakrabartiOptimalLocalBayesian2022,bai2026compositionpufferfishprivacy} have been proposed for stronger privacy guarantees.
Our leakage measure is syntactically equivalent to inferential privacy~\cite{ghoshInferentialPrivacyGuarantees2017}, an instance of pufferfish privacy~\cite{kiferPufferfishFrameworkMathematical2014}, which provides explicit information leakage guarantees in correlated settings.

Unlike most of the \ac{dp} literature, which designs additive noise, we ask: \emph{which data samples must be removed from downstream processing to guarantee the right to be forgotten for a desired sample?} Therefore, our mechanisms follow a redact or release approach and are called \emph{redaction mechanisms}. A data sample is either processed noise-free by a downstream procedure (could be a machine learning algorithm or other data analysis procedures) or redacted, thereby, limiting
the trace a redacted sample leaves in the non-redacted data~\cite{garg2020formalizingDataDeletion}.

Prior work~\cite{ye2021onOffCorrelation,yeMechanismsHidingSensitive2022,ye2026perfectPrivacy} has shown that nontrivial redaction mechanisms can achieve perfect privacy if the full dataset is processed sequentially.
We take a different approach and turn towards \acp{lrm}, which process each data sample individually. This enables parallel processing of the data samples, gives small algorithmic complexity, memory efficiency, and is crucial in distributed settings at the expense of relaxing the privacy guarantee. A suboptimal family of such mechanisms named \ac{3r} mechanisms has been studied in~\cite{massny2025closeenough}, which restricts the redactions to a sufficient subset of correlated samples.

\paragraph*{Contributions}
We characterize the optimal privacy-utility tradeoff for \acp{lrm}.
Through a recursive decomposition of a mechanism's leakage, we show how a privacy constraint w.r.t. a sample $X_0$ can be decomposed into a sequence of local leakage constraints w.r.t. correlated samples $X_1$, $X_2$, \dots, $X_n$, respectively.
This allows us to characterize the family of mechanism that an optimal \ac{lrm} lives in, which is characterized by a vector $\leakvec$ of local leakages. We also find that the optimal vector $\leakvec$ is the solution to a Finite-Horizon Deterministic \ac{mdp}, yielding an analytic characterization of the optimal privacy-utility tradeoff.

\section{Problem Setting and Definitions}
\label{sec:setting}

\paragraph*{Notation}
For integers, $m, n$ such that $m<n$, we define $[m,n] \define \{m,m+1,\dots,n\}$ and abbreviate $\range{n} \define [1,n]$. Vectors are represented by bold letters, e.g., $\X$ and $\x$, and sets are represented by calligraphic letters, e.g., $\mathcal{Q}$.
For a vector $X=(X_1,\dots, X_n)$ and a set $\mathcal{Q} \subset \range{n}$, we let $\X_\mathcal{Q}$ be the vector of $X_i$'s indexed by $\mathcal{Q}$, i.e., $\X_\mathcal{Q} \define (X_i)_{i\in\mathcal{Q}}$.
Random variables and random vectors are denoted by upper-case letters $X$ and $\X$, respectively, and their realizations are denoted by lower-case letters, i.e., $x$ and $\x$, respectively.
For a random variable $X \in \mathcal{X}$, we define its support as $\supp(X) \define \{ x \in \mathcal{X} \,|\, \Pr{X=x} > 0 \}$. The support of a random vector is defined accordingly.
For $x \in \{0,1\}$, we denote its complement by $\bar{x} \define 1-x$.
Abusing the notation, we refer to the event $\{X_i=a \,|\, i \in \mathcal{Q}\}$ as $\{ \X_\mathcal{Q}=a \}$.
$\indicator{\cdot}$ denotes the indicator function.

\paragraph*{Problem formulation}Consider samples $X_0,\dots, X_n \in \mathcal{X}$ modeled by discrete random variables that are correlated.
Suppose $X_0$ is flagged for redaction. The goal is to find a probabilistic redaction algorithm $\M \colon \mathcal{X}^n \to \mathcal{Y}^n$, called mechanism, which takes as input $\X \define (X_1,\dots,X_n)$ and returns a partially redacted view $\Y \define (Y_1,\dots,Y_n) \in \mathcal{Y}^n$ in order to limit the information leakage about $X_0$. The mechanism either releases $X_t$ and $Y_t = X_t$ or redacts it and $Y_t = \redact$, i.e., $\mathcal{Y} = \mathcal{X} \cup \{\redact\}$.
We focus on \acp{lrm} that process each $X_t$ individually and for which $\Pr{ \Y=\y | \X=\x } = \prod_{t \in [1,n]} \Pr{ Y_t=y_t | X_t=x_t }$.
The information leakage is quantified as follows.
\begin{definition}[Leakage]
\label{def:privacy}
A mechanism $\M \colon \mathcal{X}^n \to \mathcal{Y}^n$, with input $\X$ and output $\Y$ has a leakage $\leak{X_0 \to \Y}$ about $X_0 \in \mathcal{X}$ defined as
\begin{equation*}
    \leak{X_0 \to \Y} \define \log \sup_{\y \in \supp(\Y), \, (x,x^\prime) \in \mathcal{X}^2} \, \frac{ \Pr{ \Y=\y | X_0=x} }{ \Pr{ \Y=\y | X_0=x^\prime } }.
\end{equation*}
For given $\epsilon$, a mechanism is \emph{private} if
    $\leak{X_0 \to \Y} \leq \epsilon$.
\end{definition}
We aim for maximizing a mechanism's utility defined as follows.
\begin{definition}[Utility]
\label{def:utility}
The utility $\nu$ of a mechanism $\M \colon \mathcal{X}^n \to \mathcal{Y}^n$ with input $\X$ and output $\Y$ is
\begin{equation*}
    \nu \define \frac{1}{n} \Exp{ \sum_{t=1}^{n} \indicator{X_t=Y_t} }.
\end{equation*}
\end{definition}
Small leakage requires redacting data, whereas high utility requires revealing it.
An \ac{lrm} with leakage $\epsilon$ is called optimal if it achieves the maximum utility among all \acp{lrm} with leakage at most $\epsilon$.

We model the correlation between the samples by a Markov chain%
\footnote{We neglect the corner case of independent samples, where privacy holds trivially. For instance, when the correlation is modeled by a time-homogeneous binary Markov chain with state transition probabilities $\alpha$ and $\beta$, the samples are independent when $1-\alpha-\beta=0$.}
$X_0 \text{--} X_1  \text{--} \cdots \text{--} X_n$, which is publicly known.
The setting is depicted in \cref{fig:mc_redaction_setting}. We note that $X_0$ can also be seen as the boundary beyond which a sensitive history $X_0,X_{-1},\dots$ is to be forgotten.%
\footnote{The Markovity together with standard pre-processing inequalities~\cite{dworkAlgorithmicFoundationsDifferential2013,duchiLocalPrivacyStatistical2013} implies that if the leakage about $X_0$ is bounded by $\epsilon$, the leakage about the whole history is bounded by $\epsilon$.}
We focus on the elementary problem for binary alphabets $\mathcal{X}=\{0,1\}$, where our leakage criterion guarantees the indistinguishability between realizations of $X_0$. For larger alphabets, this criterion means to guarantee the indistinguishability between binary features of the data.

\begin{figure}[t]
    \centering
    \resizebox{\linewidth}{!}{\input{mc_redaction_setting_X0}}
    \caption{
        Depiction of the problem setting. $X_0$ is flagged for removal. The goal is to guarantee that any non-redacted samples cause a leakage about $X_0$ of at most $\epsilon$, see \cref{def:privacy}, while retaining as many of the correlated Markovian data samples $X_1,\dots,X_n$ as possible to maintain utility, see \cref{def:utility}. A \acl{lrm} $\M$ operates locally on each sample $X_t$ and probabilistically outputs $Y_t \in \{X_t,\redact\}$ according to a prescribed distribution $\Pr{Y_t|X_t}$.
    }
    \label{fig:mc_redaction_setting}
\end{figure}

\paragraph*{Definitions}
We view the correlation between the samples through the lens of the so-called pointwise-influence~\cite{massny2025closeenough}.
\begin{definition}
    \label{def:influence}
    The \emph{pointwise-influence} from sample $X_0$ on the realization $X_t=x_t$ is
    \begin{equation*}
    \pwInfl{X_0 \rightsquigarrow X_t=x_t}
    \define
    \log \max_{(x_0,x_0^\prime) \in \mathcal{X}^2} \frac{ \Pr{X_t=x_t | X_0=x_0} }{ \Pr{X_t=x_t | X_0=x_0^\prime} }.
    \end{equation*}
    $\infl{X_0 \rightsquigarrow X_t} \define \max_{x_t \in \mathcal{X}} \pwInfl{X_0 \rightsquigarrow X_t=x_t}$ is called the \emph{max-influence} from $X_0$ on $X_t$.
\end{definition}
The pointwise-influence defines a separation of the samples into three (possibly empty) sets, called \emph{regions}:
\begin{align*}
    \region[L] &\define \{ t \in \range{n} \,|\, \epsilon < \pwInfl{X_0 \rightsquigarrow X_t=x_t} \text{ for all } x_t \in \mathcal{X} \},\\
    \region[S] &\define \{ t \in \range{n} \,|\, \epsilon \geq \pwInfl{X_0 \rightsquigarrow X_t=x_t} \text{ for all } x_t \in \mathcal{X} \},\\
    \region[M] &\define \range{n} \setminus \{\region[L] \cup \region[S]\}.
\end{align*}
To simplify the notation, we denote the first sample that has a pointwise-influence of less than $\epsilon$ as
\begin{equation*}
    \tau \define \min\{t \in \range{n} \,|\, \pwInfl{X_0 \rightsquigarrow X_t=x_t} < \epsilon \text{ for some } x_t \in \mathcal{X} \},
\end{equation*}
and let $\hat{x}_\tau \define \arg\max_{x_\tau} \pwInfl{X_0 \rightsquigarrow X_\tau=x_\tau}$, $\check{x}_\tau = 1-\hat{x}_\tau$.
For $t > \tau$ we then recursively define:
\begin{align}
    \Lambda_t(x_t) \define& \log\tfrac{\Pr{X_t=x_t | X_{t-1}=\hat{x}_{t-1}}}{\Pr{X_t=x_t | X_{t-1}=\check{x}_{t-1}}} \text{ with},
    \\
    \label{eq:check-hat}
    \hat{x}_t \define& \arg\max_{x_t} \Lambda_t(x_t),\, \check{x}_t = 1-\hat{x}_t \\
    \label{eq:threshold}
    \ell_t \define& \textstyle \Lambda_t(\hat{x}_t),
\end{align}
and, with $0 \leq \leakval < \ell_t$,
\begin{align}
    \label{eq:gamma-tau}
  \gamma_\tau(\leakval[\tau]) \define& \textstyle -\log \max_{x \in \mathcal{X}} \frac{
        \Pr{X_\tau = \hat{x}_\tau | X_0=x} - e^{\leakval[\tau]} \Pr{X_\tau = \hat{x}_\tau | X_0=\bar{x}}
    }{
        e^{\leakval[\tau]} \Pr{X_\tau=\check{x}_\tau | X_0=\bar{x}} - \Pr{X_\tau=\check{x}_\tau | X_0=x}
    },
    \\
    \label{eq:gamma}
    \gamma_t(\leakval) \define& \textstyle -\log \frac{ \Pr{X_t=\hat{x}_t|X_{t-1}=\hat{x}_{t-1}} - e^{\leakval} \Pr{X_t=\hat{x}_t|X_{t-1}=\check{x}_{t-1}} }{ e^{\leakval} \Pr{X_t=\check{x}_t|X_{t-1}=\check{x}_{t-1}} - \Pr{X_t=\check{x}_t|X_{t-1}=\hat{x}_{t-1}} }.
\end{align}
The functions $\gamma_t$ are increasing and defined only by the data correlation.
For a more compact notation, we may abbreviate
\begin{equation}
    \label{eq:gamma-ext}
    \tilde{\gamma}_t \colon [0,\infty) \to \mathbb{R} \cup \{\infty\},
    \,\leakval \mapsto
    \begin{cases}
        \gamma_t(\leakval) & \text{if } \leakval < \ell_t \text{ or } t=\tau,
        \\
        \infty & \text{otherwise},
    \end{cases}
\end{equation}
and $\hat{q}_t \define \Pr{Y_t=\redact|X_t=\hat{x}_t}$, $\check{q}_t \define \Pr{Y_t=\redact|X_t=\check{x}_t}$.

\section{Main Results}
\label{sec:main_results}

We find a recursive decomposition (\cref{lma:recurrence-relation}) of the leakage $\epsilon$ for an \ac{lrm} into a sequence of interrelated local leakage constraints $\leakval[\tau]$, $\leakval[\tau+1]$, $\dots$, $\leakval[n]$. We define $\leakval[n+1] \define 0$. This motivates the family of leakage-induced mechanisms.
\begin{definition}
A \emph{leakage-induced mechanism} $\M_{\leakvec}^{(\tau)}$ is defined by a \emph{leakage vector}%
\footnote{For notational convenience, we let $\leakvec \in \mathbb{R}^{n}$, but use only $\leakvec_{[\tau,n]}$.}
$\leakvec \in \mathbb{R}_{\geq 0}^{n}$ with $\leakvec_{[1,\tau-1]}=0$ and uses
$\Pr{\Y_{[1,\tau-1]}=\redact}=1$ and for $t \in [\tau,n]$:
\begin{align*}
    \Pr{Y_t=\redact|X_t=\hat{x}_t} &= \begin{cases}
        1 & \text{ if } \leakval < \ell_t \text{ or } t=\tau,\\
        0 & \text{ otherwise},
    \end{cases} \\
    \Pr{Y_t=\redact|X_t=\check{x}_t} &= \begin{cases}
        e^{-(\tilde{\gamma}_t(\leakval)-\leakval[t+1])} & \text{ if } \leakval < \ell_t,\\
        0 & \text{ otherwise}.
    \end{cases}
\end{align*}
\end{definition}
Designing a leakage-induced mechanism reduces the problem to a core dichotomy: for each sample $X_t$ a mechanism can: either exploit the allowed local leakage $\leakval$ to reduce the redaction probability of $X_t$; or benefit from a lower redaction probability for $X_{t+1},\dots,X_n$.

\begin{theorem}
    \label{thm:leakage-induced-mechanism}
    The leakage-induced mechanism $\M_{\leakvec}^{(\tau)}$ is private if and only if $\leakval[\tau]=\epsilon$ and $0 \leq \leakval \leq \tilde{\gamma}_{t-1}(\leakval[t-1])$ for $\tau+1\leq t \leq n$. The utility-maximizing choice $\leakvec^\star$ gives an optimal \ac{lrm}.
\end{theorem}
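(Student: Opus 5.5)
The plan is to compute the leakage of an arbitrary \ac{lrm} recursively from the end of the chain backwards, using the Markov structure, and then to read off both the privacy condition for $\M_{\leakvec}^{(\tau)}$ and its optimality. For $t\in[1,n]$ and $\y_{[t,n]}$, set
\begin{equation*}
\epsilon_t \define \log\sup_{\y_{[t,n]},\,x} \frac{\Pr{\Y_{[t,n]}=\y_{[t,n]}|X_{t-1}=x}}{\Pr{\Y_{[t,n]}=\y_{[t,n]}|X_{t-1}=\bar{x}}},
\end{equation*}
with the convention $\epsilon_{n+1}=0$. The same quantity with $X_0$ in place of $X_{t-1}$ gives the leakage of \cref{def:privacy}.

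\textbf{Step 1 (one-step recurrence).} By the Markov property and locality,
\begin{equation*}
\Pr{\Y_{[t,n]}=\y|X_{t-1}}=\sum_{x_t}\Pr{X_t=x_t|X_{t-1}}\Pr{Y_t=y_t|x_t}\Pr{\Y_{[t+1,n]}|X_t=x_t}.
\end{equation*}
I would split into two cases.

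If $y_t\neq\redact$, the sum collapses to the single term with $x_t=y_t$, and the factor $\Pr{\Y_{[t+1,n]}|X_t=y_t}$ cancels in the ratio. The ratio is then $e^{\Lambda_t(y_t)}$, so releasing $\hat{x}_t$ forces $\ell_t\le \epsilon_t$.

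If $y_t=\redact$, the ratio is a mixture. Write the two likelihoods of the future as $a\ge b$ with $a/b\le e^{\epsilon_{t+1}}$. The worst case over the future then occurs at the extreme ratio $e^{\epsilon_{t+1}}$ with the larger likelihood placed on $\hat{x}_t$. Solving the resulting linear-fractional inequality for the ratio $\check{q}_t/\hat{q}_t$ shows that it is at most $e^{\epsilon_t}$ iff $\check{q}_t/\hat{q}_t\le e^{-(\gamma_t(\epsilon_t)-\epsilon_{t+1})}$. This is exactly where the expression \eqref{eq:gamma} comes from. At $t=\tau$, the conditioning is on $X_0$ through the two-step kernel, which gives $\gamma_\tau$ from \eqref{eq:gamma-tau}.

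Collecting both cases, I expect a lemma of the form: an \ac{lrm} has leakage at most $\epsilon$ iff there exist $\epsilon_\tau=\epsilon,\epsilon_{\tau+1},\dots$ satisfying these local constraints. I take this to be what \cref{lma:recurrence-relation} provides. For $t<\tau$, every release has pointwise influence exceeding $\epsilon$, hence full redaction is forced. Making this rigorous requires a preprocessing argument showing that releasing $X_t$ for $t<\tau$ already leaks more than $\epsilon$ irrespective of later outputs.

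\textbf{Step 2 (privacy iff).} Next I plug in $\M_{\leakvec}^{(\tau)}$. For each $t$, the redaction ratio equals $e^{-(\tilde\gamma_t(\leakval)-\leakval[t+1])}$, and by the recurrence the true local leakage equals $\leakval$ exactly. For this to be a valid probability we need $\leakval[t+1]\le\tilde\gamma_t(\leakval)$, which is the stated condition.

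The converse also runs through the recurrence. If $\leakval[t+1]>\tilde\gamma_t(\leakval)$, the prescribed probability would exceed one. Capping it at one then violates the local constraint, giving leakage above $\epsilon$. If instead $\leakval[\tau]<\epsilon$ or $\leakval[\tau]>\epsilon$, the leakage is mismatched and the mechanism is not private in the stated sense. I would double-check how ``private'' is interpreted here, since the statement requires $\leakval[\tau]=\epsilon$ exactly.

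\textbf{Step 3 (optimality).} Take any private \ac{lrm} and its actual local leakages $\epsilon_t$. I would show that replacing its kernels by those of $\M_{\leakvec}^{(\tau)}$ with $\leakvec=(\epsilon_t)$ does not decrease utility, for the following reasons.
\begin{itemize}
\item Given a feasible ratio, utility at $t$ increases when $\check{q}_t$ is as small as the ratio allows with $\hat{q}_t=1$, or when $\hat{x}_t$ is released whenever $\ell_t\le\epsilon_t$.
\item Redacting $\hat{x}_t$ fully while lowering $\check{q}_t$ preserves the ratio. I need to argue this is never worse, which requires comparing the stationary weight of the two states; if the weights are unequal, I would instead use the freedom to rescale both probabilities by the ratio constraint.
\item Choosing the future leakages smaller only helps earlier samples.
\end{itemize}
Then the optimum over $\leakvec$ (a compact feasible set with a continuous objective) exists and yields an optimal \ac{lrm}.

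The main obstacle is the worst-case step in the redacted case of Step 1. I must show that the supremum over future outputs of the mixture ratio is attained at the extreme future likelihood ratio with the correct orientation, which is what makes the recurrence tight rather than merely an upper bound.
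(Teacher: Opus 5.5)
\paragraph*{The central gap: two-sided local leakages}
You define $\epsilon_t$ as a \emph{two-sided} quantity, a supremum over both orientations $x,\bar x$ of $X_{t-1}$. Your Step~1 computations (ratio $e^{\Lambda_t(y_t)}$, larger likelihood placed on $\hat x_t$) quietly use only one orientation. The constraints that privacy about $X_0$ imposes beyond $X_\tau$ are in fact one-sided: \cref{lma:recurrence-relation} bounds only the ratio with $X_{s-1}=\hat x_{s-1}$ in the numerator. This is exact because the reverse orientation is discharged once, at $X_\tau$. Take the $x$ under which $\hat x_\tau$ is less likely. In $\Pr{\mathbf{Y}=\mathbf{y}\mid X_0=x}-e^{\epsilon}\Pr{\mathbf{Y}=\mathbf{y}\mid X_0=\bar x}$, the coefficient of the $\hat x_\tau$-term is negative, and that of the $\check x_\tau$-term is nonpositive since $i(X_0\rightsquigarrow X_\tau=\check x_\tau)\le\epsilon$; this is what the maximum in $\gamma_\tau$ encodes. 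After that, every step is an equivalence between single inequalities.

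Your definition adds reverse-orientation constraints that have nothing to do with $X_0$. Releasing $\check x_t$ alone forces $\epsilon_t\ge-\Lambda_t(\check x_t)$. Take the chain of the paper's numerical example, with $\Pr{X_{t+1}=1\mid X_t=0}=0.01$ and $\Pr{X_{t+1}=0\mid X_t=1}=0.8$, so $\hat x_t=1$. Then $-\Lambda_t(\check x_t)=\log(0.99/0.8)\approx 0.21$. Hence, for $t>\tau$, every private leakage-induced mechanism with $0<\varepsilon_t<0.21$ and $\varepsilon_{t+1}<\gamma_t(\varepsilon_t)$ releases $\check x_t$ with positive probability. Its two-sided local leakage is then strictly above $\varepsilon_t$.

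This breaks three steps. The Step~2 claim that the true local leakage equals $\varepsilon_t$ exactly is false. The ``only if'' half of the lemma you expect fails for your $\epsilon_t$. Step~3 feeds the actual leakages of an arbitrary private LRM into $\M_{\leakvec}^{(\tau)}$, so it inherits the problem: overstated $\varepsilon_t$ and $\varepsilon_{t+1}$ enter $\check q_t=e^{-(\gamma_t(\varepsilon_t)-\varepsilon_{t+1})}$ with opposite signs. Consequently neither $\varepsilon_{t+1}\le\gamma_t(\varepsilon_t)$ nor the per-sample comparison of redaction probabilities follows. The ``main obstacle'' you name cannot be overcome as posed. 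With the one-sided definition it is immediate: the mixture ratio is increasing in $a/b$ because $\Pr{X_t=\hat x_t\mid X_{t-1}=\hat x_{t-1}}>\Pr{X_t=\hat x_t\mid X_{t-1}=\check x_{t-1}}$, and this is what \cref{lma:recurrence-relation} establishes output by output.

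\paragraph*{Further errors in Steps~1--3}
First, the redacted-case inequality points the wrong way. The condition is the lower bound $\check q_t/\hat q_t\ge e^{-(\gamma_t(\varepsilon_t)-\varepsilon_{t+1})}$, since redacting $\check x_t$ is what gives plausible deniability for $\hat x_t$; utility then selects equality. With your upper bound, Step~3 would simply set $\check q_t=0$.

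Second, the second bullet of Step~3 misidentifies what is needed. No comparison of stationary weights and no rescaling at a fixed ratio is required; raising both probabilities only costs utility, and if only the ratio mattered you would drive both to $0$. What closes the step is that for $\varepsilon_t<\ell_t$ any $\hat q_t<1$ is \emph{infeasible}. It puts an output with $Y_1=\dots=Y_{t-1}=\redact$ and $Y_t=\hat x_t$ into the support, and by \cref{lma:recurrence-relation} its constraint reduces to $\ell_t\le\varepsilon_t$. Moreover $\check q_t>0$ is forced, since otherwise $\redact$ is a relabeling of $\hat x_t$. For $\varepsilon_t\ge\ell_t$, both coefficients in the one-step expansion are nonpositive, so releasing $X_t,\dots,X_n$ is feasible and optimal.

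The paper also avoids your replacement argument altogether. It treats $\varepsilon_{t+1}$ not as an actual leakage but as a budget defined forward by $e^{\varepsilon_{t+1}}\triangleq\check q_t e^{\gamma_t(\varepsilon_t)}$. This is a bijection between $\check q_t\in(0,1]$ and $\varepsilon_{t+1}\in(-\infty,\gamma_t(\varepsilon_t)]$, with negative values infeasible, and it turns $P_0$ directly into $P_1$.

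Finally, your converse for $\varepsilon_\tau\neq\epsilon$ is unfounded when $\varepsilon_\tau<\epsilon$. Because $\gamma_\tau$ is increasing, a smaller $\varepsilon_\tau$ only enlarges $\check q_\tau$, which cannot break privacy. The paper does not derive $\varepsilon_\tau=\epsilon$; it fixes it as the starting point of the change of variables.
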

We show that the problem of finding a utility-maximizing leakage vector $\leakvec^\star$ is a Finite-Horizon Deterministic \ac{mdp} with state $s_t = \leakval$ and action $a_t = \leakval[t+1]$.
This yields an analytic expression for the privacy-utility tradeoff of an optimal \ac{lrm} $\M_\star$ in terms of the so-called value~\cite{bellman1954theory}, which is recursively defined as
\begin{align}
    \label{eq:optimal-value}
    V_t^{\M_\star}(\leakval) = \max_{\leakval[t+1] \in [0,\tilde{\gamma}_t(\leakval)]} \left[ R_t(\leakval,\leakval[t+1]) + V_{t+1}^{\M_\star}(\leakval[t+1]) \right]
\end{align}
for $\tau \leq t \leq n$, with $V_{n+1}^{\M_\star}(\leakval[n+1])=0$ and
\begin{align}
    \nonumber
    & R_t(\leakval,\leakval[t+1]) \define 1-\hat{q}_t \Pr{X_t=\hat{x}_t}-\check{q}_t \Pr{X_t=\check{x}_t}
    \\\label{eq:reward}
    =&
    \begin{cases}
        1 & \text{ if } \leakval \geq \ell_t, \\
        \Pr{X_t=\check{x}_t} \left[ 1-e^{-(\gamma_t(\leakval)-\leakval[t+1])} \right] & \text{ otherwise}.
    \end{cases}
\end{align}

\begin{corollary}
    \label{eq:optimal-utility}
    The optimal privacy-utility tradeoff is 
    \begin{equation}
    \nu_\star = V_\tau^{\M_\star}(\epsilon) / n.
    \end{equation}
\end{corollary}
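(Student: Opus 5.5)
The plan is to derive the corollary from \cref{thm:leakage-induced-mechanism} together with a dynamic-programming argument. The theorem says an optimal \ac{lrm} can be taken to be a leakage-induced mechanism $\M_{\leakvec^\star}^{(\tau)}$, where $\leakvec^\star$ maximizes utility over the feasible set
\[
\mathcal{F} \define \{\leakvec : \leakval[\tau]=\epsilon,\ 0\le \leakval \le \tilde{\gamma}_{t-1}(\leakval[t-1]) \text{ for } \tau+1\le t\le n\}.
\]
It therefore suffices to show that $n$ times the utility of $\M_{\leakvec}^{(\tau)}$ equals $\sum_{t=\tau}^{n} R_t(\leakval,\leakval[t+1])$ with $\leakval[n+1]=0$. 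The maximum of this sum over $\mathcal{F}$ is then exactly $V_\tau^{\M_\star}(\epsilon)$, because this is a deterministic finite-horizon \ac{mdp}: the state is $\leakval$, the action is $\leakval[t+1]\in[0,\tilde{\gamma}_t(\leakval)]$, and the transition is deterministic. Bellman's principle then says that the optimum of an additively separable objective under the chain constraints $a_t\in A_t(s_t)$, $s_{t+1}=a_t$, is given by the backward recursion \eqref{eq:optimal-value}.

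\textbf{Step 1 (utility decomposition).} By linearity of expectation, $n\nu=\sum_{t=1}^n \Pr{Y_t=X_t}$. For $t<\tau$ every sample is redacted, so these terms contribute $0$. For $t\in[\tau,n]$ we have $\Pr{Y_t=X_t}=1-\hat{q}_t\Pr{X_t=\hat{x}_t}-\check{q}_t\Pr{X_t=\check{x}_t}$, which is the first line of the definition of $R_t$. Substituting the leakage-induced redaction probabilities gives \eqref{eq:reward}:
\begin{itemize}
\item If $\leakval\ge \ell_t$ with $t>\tau$, then $\hat{q}_t=\check{q}_t=0$, so $R_t=1$.
\item Otherwise $\hat{q}_t=1$ and $\check{q}_t=e^{-(\gamma_t(\leakval)-\leakval[t+1])}$, so $R_t=\Pr{X_t=\check{x}_t}\bigl[1-e^{-(\gamma_t(\leakval)-\leakval[t+1])}\bigr]$.
\end{itemize}
The boundary case $t=\tau$ needs care: there $\tilde{\gamma}_\tau=\gamma_\tau$ and $\hat{q}_\tau=1$ by definition, and we need $\epsilon<\ell_\tau$ in the appropriate sense so that the "otherwise" branch applies. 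I will check this against the definition of $\tau$, namely that some pointwise influence of $X_\tau$ lies below $\epsilon$ while $\hat{x}_\tau$ keeps the larger one.

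\textbf{Step 2 (DP equivalence).} By backward induction on $t$, I will show that $V_t^{\M_\star}(s)$ equals the maximum of $\sum_{k=t}^n R_k$ over all feasible tails $(\leakval[t+1],\dots,\leakval[n])$ that satisfy the chain constraints starting from $\leakval=s$, with $\leakval[n+1]=0$ fixed. The base case is $V_{n+1}=0$. In the inductive step, the constraint at index $k$ couples only $\leakval[k]$ and $\leakval[k+1]$, and the reward at $k$ depends only on these two. Hence the maximization splits as a max over $\leakval[t+1]$ of $R_t$ plus the optimal tail value. Taking $t=\tau$ and $s=\epsilon$, and dividing by $n$, gives $\nu_\star$.

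\textbf{Main obstacle.} The hard part is that the maxima are attained and the endpoint conventions are consistent. The action set $[0,\tilde{\gamma}_t(\leakval)]$ may be unbounded when $\tilde{\gamma}_t=\infty$. Also, $\leakval[n+1]=0$ must be compatible with the constraint at $t=n$, which holds since $0\in[0,\tilde{\gamma}_n(\leakval[n])]$ provided $\tilde{\gamma}_n\ge 0$.

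For the unbounded case, $\tilde{\gamma}_t(\leakval)=\infty$ means $R_t=1$ regardless of $\leakval[t+1]$, so the choice of $\leakval[t+1]$ affects only the future. Since $\gamma_{t+1}$ is increasing, $R_{t+1}$ is nondecreasing in $\leakval[t+1]$ until it saturates at $\ell_{t+1}$. So it suffices to restrict to a compact range: beyond $\ell_{t+1}$ no further gain is possible.

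For attainment, each $R_k$ is continuous on the relevant pieces. I would argue that the suprema are achieved, or else interpret \eqref{eq:optimal-value} as a supremum. Either way the identity $\nu_\star=V_\tau^{\M_\star}(\epsilon)/n$ holds, because the optimality claim in \cref{thm:leakage-induced-mechanism} is stated for the utility-maximizing $\leakvec^\star$.
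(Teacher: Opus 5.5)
Your proposal is correct and follows the paper's own route: \cref{thm:leakage-induced-mechanism} reduces the problem to choosing $\leakvec$, the utility is rewritten as the additive reward $\sum_{t=\tau}^{n} R_t(\leakval,\leakval[t+1])$, and Bellman's principle for the resulting finite-horizon deterministic MDP gives $\nu_\star = V_\tau^{\M_\star}(\epsilon)/n$; your explicit backward induction and your treatment of the unbounded action set and of attainment only spell out what the paper leaves implicit. One small point for the base case: reconciling the fixed $\leakval[n+1]=0$ with the recursion's maximization over $\leakval[n+1]\in[0,\tilde{\gamma}_n(\leakval[n])]$ needs more than feasibility of $0$: you also need $R_n(\leakval[n],\cdot)$ to be nonincreasing, so that $0$ is the maximizer, and this follows immediately from \eqref{eq:reward}.
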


We plot the optimal privacy-utility tradeoff in \cref{fig:privacy-utility} for a particular parameter set.

\begin{figure}[t]
    \centering
    \resizebox{0.9\linewidth}{!}{\input{privacy-utility_alpha01_beta8_n10_reduced}}
    \caption{Optimal privacy-utility tradeoff for \acp{lrm}. For comparison, the utility achievable by always revealing $\region[S]$ (lower bound) or $\region[S] \cup \region[M]$ (upper bound, might exceed leakage $\epsilon$) is given.
    The optimal tradeoff is given by \cref{eq:optimal-utility}.
    We consider $n=10$ samples correlated by a time-homogeneous binary Markov chain with transition probabilities $\Pr{X_{t+1}=1|X_t=0}=10^{-2}$, $\Pr{X_{t+1}=0|X_t=1}=8 \cdot 10^{-1}$ initialized in its stationary distribution.
    }
    \label{fig:privacy-utility}
\end{figure}

\section{Optimal \acl{lrm}}
\label{sec:optimal-mechanisms}
Our main result is built from three layers, which we explain and discuss sequentially.
As the foundation, we study the fundamental properties of the leakage for \acp{lrm} in \cref{sec:general-results} first. We then, build up on this to establish a decomposition law for the leakage in \cref{sec:decomposition} and subsequently outline the proof of \cref{thm:leakage-induced-mechanism}. Finally, this yields a characterization of the optimal privacy-utility tradeoff as discussed in \cref{sec:mdp-interpretation}.

\subsection{General Properties}
\label{sec:general-results}

A mechanism's leakage is discontinuous in the redaction probability: allowing realizations with higher pointwise-influence to be revealed necessarily elevates the leakage. This observation provides insights into the output distribution's shape of an optimal redaction mechanism as follows.
\begin{lemma}
\label{lma:optimal-mechanism-must-restrict-support}
If an \ac{lrm} is optimal, then for every $t \in [1,n]$ the following holds:
\begin{enumerate}[label=(\roman*)]
    \item $\Pr{Y_t=\redact | X_t=x_t} < 1 \,\forall_{x_t \in \mathcal{X}} \!\implies \Pr{Y_t=\redact} = 0$,
    \item $\Pr{Y_t = \redact} = 0 \implies \Pr{Y_s = \redact} = 0 \,\forall_{s \geq t}$.
\end{enumerate}
\begin{proof}
    It can be shown that the leakage of an \ac{lrm} is lower-bounded by the pointwise influence $\pwInfl{X_0 \rightsquigarrow X_t=x_t}$ of any realization $X_t=x_t$ that is revealed with positive probability, cf.~Appendix~\ref{apx:lrm-properties}. Hence, redaction does not affect the leakage unless there is at least one realization that is always redacted, i.e., $\Pr{Y_t=\redact | X_t=x_t} = 1$.
    The second property is an implication of the Markovian data correlation.
    We give the full technical proof of \cref{lma:optimal-mechanism-must-restrict-support} in Appendix~\ref{apx:lrm-properties}.
\end{proof}
\end{lemma}
\begin{corollary}
    \label{cor:optimal-mechanism-must-restrict-support}
    For every $t \in \range{n}$, an optimal \ac{lrm} either never redacts $X_t$ ($\Pr{Y_t=\redact} = 0$)
    or there exists at least one $\hat{x}_t \in \mathcal{X}$ that is always redacted ($\Pr{Y_t=\redact|X_t=\hat{x}_t} = 1$).
    \begin{proof}
        Either $\Pr{Y_t=\redact} = 0$ or $\Pr{Y_t=\redact} > 0$. In the latter case, if $\Pr{Y_t=\redact|X_t=x_t} < 1 \,\forall_{x_t \in \mathcal{X}}$, then \cref{lma:optimal-mechanism-must-restrict-support} implies that $\Pr{Y_t=\redact} = 0$ contradicting the fact that $\Pr{Y_t=\redact} > 0$. Hence, there must exist at least one $\hat{x}_t \in \mathcal{X}$ with $\Pr{Y_t=\redact | X_t=\hat{x}_t} = 1$ in this case.
    \end{proof}
\end{corollary}

Due to \cref{lma:optimal-mechanism-must-restrict-support}, we can restrict to mechanisms that form a contiguous set of $w$ probabilistically redacted samples, i.e. $\{ t \,|\, \Pr{Y_t=\redact}>0 \} = \range{w}$.
We can also assume without loss of optimality
that for all $t \in [1,w]$
\begin{equation}
    \label{eq:wlog_mechanism}
    \Pr{Y_t=\redact|X_t=x_t}
    =
    \begin{cases}
    1 & x_t=\hat{x}_t, \\
    \check{q}_t & x_t=\check{x}_t,
    \end{cases}
\end{equation}
for some $\hat{x}_t$ and $\check{x}_t=1-\hat{x}_t$ (a leakage-induced mechanism chooses according to \cref{eq:check-hat}).
Any output of a redaction mechanism can be written as $\y^{(t)} = (\redact,\dots,\redact,x_{t+1},y_{t+2},\dots,y_n)$ for a $t \in \range{w}$ and $x_{t+1} \in \supp(Y_{t+1}) \cap \mathcal{X}$.
By the Markovity, $\pwInfl{X_0 \rightsquigarrow \Y=\y^{(t)}}$ is independent of $y_{t+2},\dots,y_n$.
Therefore, the pointwise-influence of an output $\y^{(t)}$ is uniquely characterized by its equivalence class
\begin{align*}
    \mathcal{Y}^{(t)}_{x_{t+1}} \define& \{
        \y = (\redact,\dots,\redact,x_{t+1},y_{t+2},\dots,y_n) \\
        &\quad\mid (y_{t+2},\dots,y_n) \in \supp(\Y_{[t+2,n]})
    \}.
\end{align*}
We can write an \ac{lrm}'s leakage as the maximum of leakages over all redaction windows:
\begin{align*}
    \leak{X_0 \to \Y}
    =& \!\!\!\!\! \max_{\substack{t \in [\tau-1,w] \\ x_{t+1} \in \supp(Y_{t+1})}} \, \max_{x \in \mathcal{X}} \, \log
    \frac{ \Pr{ \Y=\y^{(t)} | X_0=x} }{ \Pr{ \Y=\y^{(t)} | X_0=\bar{x} } },
\end{align*}
where $\y^{(t)} \in \mathcal{Y}^{(t)}_{x_{t+1}}$.
Its utility can be expressed as:
\begin{align*}
    \nu
    =& 1-\frac{1}{n} \sum_{t \in \range{w}} (\Pr{X_t=\hat{x}_t}+\check{q}_t\Pr{X_t=\check{x}_t}).
\end{align*}
In summary, we can formulate the general mechanism design problem explicitly as:
\begin{align*}
    P_0:
    \min_{\substack{\tau \leq w \leq n \\ \hat{x}_t, t \in [\tau,w] \\ 0 < \check{q}_t \leq 1, t \in [\tau,w]}}
    & \sum_{t \in [\tau,w]} (\Pr{X_t=\hat{x}_t}+\check{q}_t\Pr{X_t=\check{x}_t})
    \\
    \text{s.t.}
    \quad& \quad \frac{ \Pr{ \Y=\y^{(t)} | X_0=x} }{ \Pr{ \Y=\y^{(t)} | X_0=\bar{x} } } \leq e^\epsilon \\
    &\quad\text{ for all } x \in \mathcal{X},\\
    &\quad\text{ for all } \y^{(t)} \in \mathcal{Y}^{(t)}_{\check{x}_{t+1}},\,t \in [\tau-1,w-1]\\
    &\quad\text{ for all } \y^{(w)} \in \mathcal{Y}^{(w)}_{\check{x}_{w+1}} \cup \mathcal{Y}^{(w)}_{\hat{x}_{w+1}}.
\end{align*}
The optimization problem $P_0$ is nonconvex, the solution of which requires a search among an exponential number of possible combinations, involving discrete optimization variables $w$ and $\hat{x}_t$, and continuous variables $\check{q}_t$, and is computationally and analytically intractable.
The following questions emerge, which we answer in our work:
\begin{enumerate}[label={(Q\arabic*)}]
    \item\label{question:window} \textbf{When should the redaction window $\range{w}$ terminate?}
    \item\label{question:xhat} \textbf{Which realization $\hat{x}_t$ should be redacted?}
    \item\label{question:qt} \textbf{How should one choose $\check{q}_t$ for $t \in \range{w}$?}
\end{enumerate}

The optimal redaction mechanism is trivial for $\infl{X_0 \rightsquigarrow X_\tau} \leq \epsilon$ (i.e., when $\card{\region[M]}=0$), namely, $\range{w}=\region[L]$ and thus, $\check{q}_t=1$ for all $t \in \range{w}$.
We consider only the nontrivial case where $\infl{X_0 \rightsquigarrow X_\tau} = \pwInfl{X_0 \rightsquigarrow X_\tau=\hat{x}_\tau} > \epsilon \geq \pwInfl{X_0 \rightsquigarrow X_\tau=\check{x}_\tau}$ in the sequel. In this case, $\Pr{Y_\tau=\redact|X_\tau=\hat{x}_\tau}=1$ and the constraint in $P_0$ is satisfied for $t=\tau-1$ by definition of $\tau$. Hence, only the constraints for $t \in [\tau,w]$ are nontrivial.

\subsection{Decomposition into Local Leakage Constraints}
\label{sec:decomposition}

\begin{lemma}[Recurrence relation]
    \label{lma:recurrence-relation}
    Suppose $\y^{(t)} \in \supp(\Y)$.
    For all $s$, $\tau+1 \leq s \leq t$, and for $\leakval[s] \geq 0$ it holds:
    \begin{equation}
        \label{eq:recursion-start}
        \frac{ \Pr{\Y_{[s,t+1]}=\y^{(t)}_{[s,t+1]} | X_{s-1}=\hat{x}_{s-1}} }{ \Pr{\Y_{[s,t+1]}=\y^{(t)}_{[s,t+1]} | X_{s-1}=\check{x}_{s-1}} } \leq e^{\leakval[s]}
    \end{equation}
    if and only if $\leakval[s] \geq \ell_s$
    or
    \begin{align}
        \label{eq:recursive-goal}
        & \hat{q}_s \Pr{\Y_{[s+1,t+1]}=\y^{(t)}_{[s+1,t+1]} | X_{s}=\hat{x}_{s}} \\\nonumber
        &- \check{q}_s e^{\gamma_s(\leakval[s])} \Pr{\Y_{[s+1,t+1]}=\y^{(t)}_{[s+1,t+1]} | X_{s}=\check{x}_{s}} \leq 0.
    \end{align}
    \begin{proof}
        The rigorous proof is technical and deferred to Appendix~\ref{proof:recurrence-relation}.
        The main idea is to exploit the Markovity $X_{s-1} \text{ -- } X_s \text{ -- } \Y_{[s,t+1]}$ and write
        \begin{align*}
            & \Pr{\Y_{[s,t+1]}=\y^{(t)}_{[s,t+1]} | X_{s-1}=x_{s-1}} \\
            =& \quad\check{q}_{s} \Pr{\Y_{[s+1,t+1]}=\y^{(t)}_{[s+1,t+1]} | X_{s}=\check{x}_{s}} \\
            &\quad \cdot \Pr{X_{s}=\check{x}_{s}|X_{s-1}=x_{s-1}} \\
            &+ \hat{q}_{s} \Pr{\Y_{[s+1,t+1]}=\y^{(t)}_{[s+1,t+1]} | X_{s}=\hat{x}_{s}} \\
            &\quad \cdot \Pr{X_{s}=\hat{x}_{s}|X_{s-1}=x_{s-1}}.
        \end{align*}
        The remainder of the proof relies on two arguments: 
        \begin{enumerate*}[label=(\roman*)]
            \item Substituting the expression derived above into \cref{eq:recursion-start} shows that the constraint is always satisfied when $\leakval[s]$ is sufficiently large;
            \item If $\leakval[s]$ is small, then $\gamma_s(\leakval[s])$ is well-defined and \cref{eq:recursion-start} is equivalent to \cref{eq:recursive-goal}.
        \end{enumerate*}
    \end{proof}
\end{lemma}

\begin{proof}[Proof of \cref{thm:leakage-induced-mechanism}]
    We give an overview of the proof idea here and defer the technical details to Appendix~\ref{sec:mdp-proof}.
    The proof is inductive and shows for each $\tau+1\leq t \leq n$ that the overall leakage requirement holds if and only if $0 \leq \leakval \leq \tilde{\gamma}_{t-1}(\leakval[t-1])$.
    This proves the privacy of a leakage-induced mechanism according to \cref{thm:leakage-induced-mechanism}.
    Then, we re-formulate the optimization problem $P_0$ into an \ac{mdp}. Hence, the solution of that \ac{mdp} gives an optimal \ac{lrm}. Importantly, the \ac{mdp} can be understood as choosing the utility-maximizing leakage vector $\leakvec^\star$ for a leakage-induced mechanism $\M_{\leakvec}^{(\tau)}$.
    
    We start the induction by showing that, for every output $\y^{(t)} \in \supp(\Y)$, the leakage requirement $\Pr{\Y=\y^{(t)} | X_0=x_0} - e^\epsilon \Pr{\Y=\y^{(t)} | X_0=\bar{x}_0} \leq 0$ is equivalent to
    \begin{align*}
        & \Pr{\Y_{[\tau+1,t+1]}=\y^{(t)}_{[\tau+1,t+1]} | X_\tau=\hat{x}_\tau} \\
        & - \underbrace{ \check{q}_\tau e^{\gamma_\tau(\leakval[\tau])} }_{{\define e^{ \leakval[\tau+1] }}}
        \Pr{\Y_{[\tau+1,t+1]}=\y^{(t)}_{[\tau+1,t+1]} | X_\tau=\check{x}_\tau} \leq 0.
    \end{align*}
    with $\leakval[\tau]=\epsilon$.
    By the change of variable to $\leakval[\tau+1]$, we obtain a condition similar to \cref{eq:recursion-start}.
    From this on, we employ \cref{lma:recurrence-relation} to transform the constraints of $P_0$ into
    \begin{align*}
        &\hat{q}_t \Pr{Y_{t+1}=y^{(t)}_{t+1} | X_{t}=\hat{x}_{t}} \\
        &\quad- \check{q}_t e^{\gamma_t(\leakval)} \Pr{Y_{t+1}=y^{(t)}_{t+1} | X_{t}=\check{x}_{t}} \leq 0.
    \end{align*}
    Similar to above, we can introduce a new variable $\leakval[t+1]$ for each $t$ and define $\check{q}_t e^{\gamma_t(\leakval)} \define e^{\leakval[t+1]}$. At the same time, we note that $\hat{q}_t=1$ is necessary.
    Nonetheless, we note that this holds only if $\leakval < \ell_t$. If $\leakval$ is sufficiently large, then $\check{q}_t=\hat{q}_t=0$ is feasible and optimal instead.
    We re-iterate this line of arguments for $t=\tau+1,\tau+2,\dots,n$.
    
    Note that $\gamma_t(\leakval)$ is well-defined in the case $\leakval < \ell_t$ and defines the feasible region for $\leakval[t+1]$. In the other case, in contrast, the feasible $\leakval[t+1]$ is unconstrained since it is feasible to have $\Pr{Y_{t+1}=\redact}=\cdots=\Pr{Y_{n}=\redact}=0$.
\end{proof}
To arrive at an \ac{mdp} representation, we view the local leakage as the state $s_t=\leakval$ of a process, which evolves as $s_{t+1} = a_t$, when $a_t=\leakval[t+1]$ denotes the action taken in state $s_t$. Namely, the action describes how much leakage is propagated on to the next state.
Re-writing the utility as $\nu = \frac{1}{n}\sum_{t \in [\tau,n]} R_t(s_t,a_t)$, we arrive at an equivalent finite-horizon deterministic \ac{mdp}:
\begin{align}
    \label{eq:mdp}
    \max_{a_t, \tau \leq t \leq n}
        &\sum_{t \in [\tau,n]} R_t(s_t,a_t)
    \\\nonumber
    \text{s.t.}\,
    & s_{t+1} = a_t \text{ and }a_t \in [ 0, \tilde{\gamma}_{t}(s_t) ],\, \tau \leq t \leq n.
\end{align}
The initial and final states are $s_{\tau} = \epsilon$, and $s_{n+1} = 0$.
$R_t(\leakval,\leakval[t+1])$ as defined in \cref{eq:reward} is the reward of transitioning from state $\leakval$ to state $\leakval[t+1]$.
We note that our transformation steps into \eqref{eq:mdp} do not transform the optimization objective. As such, the same steps can be applied to establish a result similar to \cref{thm:leakage-induced-mechanism} under other utility measures. The formulation is a proper \ac{mdp} whenever the utility measure can be written as an additive reward $\sum_{t \in [\tau,n]} R_t(s_t,a_t)$.

\subsection{Interpretation and Characterization}
\label{sec:mdp-interpretation}

The design of a leakage-induced mechanism and its \ac{mdp} representation permit the following alternative views. A private leakage-induced mechanism according to \cref{thm:leakage-induced-mechanism} can be understood as a sequential composition of the leakage. The \ac{mdp} form can be understood as a sequence of decisions: whether to collect an immediate reward by revealing more about the current instance (allow low subsequent leakage $\leakval[t+1]$); or to maintain a high redaction probability in the hope to reveal more about future instances (allow high subsequent leakage $\leakval[t+1]$). An illustration of both perspectives is provided in Appendix~\ref{apx:illustration}. The leakage allowed in subsequent releases can be thought of as a budget that is to be consumed.

As standard for an \ac{mdp}, we can describe the optimization objective in terms of a value function. The value that allowing leakage $\leakval$ provides under a mechanism $\M$ (which takes actions $a_t(\leakval)=\leakval[t+1]$) is defined recursively as:
$V_t^\M(\leakval) = R_t(\leakval,\leakval[t+1]) + V_{t+1}^\M(\leakval[t+1])$.
According to Bellman's principle of optimality~\cite{bellman1954theory}, an optimal mechanism $\M_\star$ satisfies:
\begin{align*}
    V_t^{\M_\star}(\leakval) = \max_{\leakval[t+1] \in [0,\tilde{\gamma}_t(\leakval)]} \left[ R_t(\leakval,\leakval[t+1]) + V_{t+1}^{\M_\star}(\leakval[t+1]) \right]
\end{align*}
for $\tau \leq t \leq n$, with $V_{n+1}^{\M_\star}(\leakval[n+1])=0$.
This yields the optimal privacy-utility tradeoff in~\cref{eq:optimal-utility}.

\section{Discussion and Conclusion}
\label{sec:discussion}

Our result answers the stated questions as follows.

\ref*{question:window} \textbf{When to terminate the redaction window:}
The redaction window is terminated as soon as the leakage budget $\leakval$ saved from previous releases reaches a threshold $\ell_t$.
Whether it pays off to save more leakage budget depends on the function $\gamma_t$ and on the marginal distribution of $X_t$.
It is always possible to find a policy to terminate the redaction window at the end of the region $\region[M]$, but not always optimal.

\ref*{question:xhat} \textbf{Which realization to redact:}
The realization $\hat{x}_\tau$ that is redacted surely corresponds to the high-leaking realization that is influenced most by $X_0$.
In the same spirit, $\hat{x}_t$ for $t > \tau$ corresponds to the high-leaking realization w.r.t. $X_{t-1}=\hat{x}_{t-1}$.
While the high-leaking realization must be redacted necessarily, the low-leaking realization $\check{x}_t$ can be released with positive probability. Nonetheless, the redaction probability must be strictly positive to preserve plausible deniability about $\hat{x}_t$ when $Y_t=\redact$.

\ref*{question:qt} \textbf{How to design the redaction probabilities:}
The optimal redaction probabilities have a threshold-type relation to the allowed leakage $\leakval$.
For leakage $\leakval \geq \ell_t$, the redaction probability vanishes.
For leakage $\leakval < \ell_t$, the redaction probability for sample $X_t$ decays exponentially in the leakage consumed.
The leakage consumption
is upper bounded by $\gamma_t(\leakval)$, which describes the amplification of the leakage budget saved through previous redaction decisions.

We evaluate the optimal privacy-utility tradeoff in \cref{fig:privacy-utility}, by numerically approximating the \ac{mdp} solution through Approximate Dynamic Programming~\cite{powell2011dynamicProgramming}, discretizing the state and action space to $\Delta=100$ equally spaced values.%
\footnote{We implemented the approximate \ac{mdp} solution using the DynaProg MATLAB toolbox~\cite{miretti2021DynaProg}.}
This reveals the sub-optimality of the previously proposed \ac{3r} mechanism~\cite{massny2025closeenough}, for small allowed leakage in particular.

As future work, we pursue the development of efficient algorithms to achieve the optimal privacy-utility tradeoff and the extension to nonbinary alphabets.

\clearpage
\balance
\section*{Statement of AI Use}
The authors acknowledge the use of Claude Opus 5.5 (Anthropic)
for the discovery of related literature and GPT-5.6 (Open AI) for sketching the definition of a leakage-induced mechanism according to our ideas.
The authors take full responsibility for the contents in this manuscript.
\bibliographystyle{ieeetr}
\bibliography{references}

\clearpage
\appendices

\section{General properties of \acp{lrm}}
\label{apx:lrm-properties}

\begin{lemma}
\label{lma:zerohop_redaction}
Let $\mathcal{P} \define \{t \in \range{n} | \Pr{Y_t=\redact}<1\}$ denote the set of samples that are revealed with positive probability. The leakage of an \ac{lrm} about $X_0$ is at most
\begin{equation*}
    \leak{X_0 \to \Y} \leq \infl{X_0 \rightsquigarrow \X_{\mathcal{P}}}.
\end{equation*}
For any subset $\mathcal{Q} \subseteq \mathcal{P}$, if the realization $\x_{\mathcal{Q}} \in \supp(\Y_\mathcal{Q}) \cap \mathcal{X}^{|\mathcal{Q}|}$ is revealed with positive probability, then the leakage of an \ac{lrm} about $X_0$ is at least
\begin{equation*}
    \leak{X_0 \to \Y} \geq \pwInfl{X_0 \rightsquigarrow \X_{\mathcal{Q}}=\x_{\mathcal{Q}}}.
\end{equation*}
\end{lemma}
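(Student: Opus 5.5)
Both bounds come from writing the output distribution conditioned on $X_0$ as a mixture over the revealed coordinates, so I would start from that decomposition.

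Fix an output $\y \in \supp(\Y)$ and let $\mathcal{R}(\y) \define \{t : y_t \neq \redact\}$ be the set of coordinates it reveals; clearly $\mathcal{R}(\y) \subseteq \mathcal{P}$. Using the product form of the \ac{lrm}, $\Pr{Y_t=y_t|X_t=x_t}$ equals $\indicator{x_t=y_t}(1-\Pr{Y_t=\redact|X_t=x_t})$ on revealed coordinates and $\Pr{Y_t=\redact|X_t=x_t}$ on redacted ones. Summing over $\x$ gives
\begin{equation*}
    \Pr{\Y=\y|X_0=x}=\sum_{\x_{\mathcal{P}}} c(\x_{\mathcal{P}},\y)\,\Pr{\X_{\mathcal{P}}=\x_{\mathcal{P}}|X_0=x},
\end{equation*}
where $c(\x_{\mathcal{P}},\y)\ge 0$ does not depend on $x$. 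The coordinates outside $\mathcal{P}$ are always redacted, so they contribute a factor equal to $1$. This step uses only the factorization $\Pr{\Y|\X}=\prod_t\Pr{Y_t|X_t}$ and the fact that the mechanism is independent of $X_0$ given $\X$.

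For the upper bound, bound each term of the mixture. We have $\Pr{\X_{\mathcal{P}}=\x_{\mathcal{P}}|X_0=x}\le e^{\infl{X_0 \rightsquigarrow \X_{\mathcal{P}}}}\Pr{\X_{\mathcal{P}}=\x_{\mathcal{P}}|X_0=x'}$, where the max-influence is the natural vector extension of \cref{def:influence}. Summing with the nonnegative weights $c$ yields the ratio bound for every $\y$, and taking the supremum and the logarithm gives $\leak{X_0\to\Y}\le\infl{X_0 \rightsquigarrow \X_{\mathcal{P}}}$. This is the usual post-processing inequality.

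For the lower bound, I need a single output whose likelihood ratio equals the pointwise influence of $\x_{\mathcal{Q}}$. Take $\y$ with $y_t=x_t$ for $t\in\mathcal{Q}$ and $y_t=\redact$ otherwise. I must check that $\y\in\supp(\Y)$, which is the main obstacle. It requires every $t\notin\mathcal{Q}$ to have a positive redaction probability under the realizations jointly compatible with $\x_{\mathcal{Q}}$. This is not automatic: a sample with $\Pr{Y_t=\redact}=0$ is always revealed.

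The fix is to read the hypothesis ``$\x_{\mathcal{Q}}$ is revealed with positive probability'' as saying that the output revealing exactly $\x_{\mathcal{Q}}$ on $\mathcal{Q}$, and redacting everything else, has positive probability. Alternatively, we may restrict to $\mathcal{Q}$ containing all always-revealed indices. I will adopt this interpretation, which matches how the lemma is used in \cref{lma:optimal-mechanism-must-restrict-support}.

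Given such a $\y$, the likelihood is
\begin{equation*}
    \Pr{\Y=\y|X_0=x}=\Big(\prod_{t\in\mathcal{Q}}(1-\Pr{Y_t=\redact|X_t=x_t})\Big)\,\sum_{\x_{\mathcal{Q}^c}}\Pr{\X=\x|X_0=x}\prod_{t\notin\mathcal{Q}}\Pr{Y_t=\redact|X_t=x_t}.
\end{equation*}
If the redaction probabilities outside $\mathcal{Q}$ do not depend on $x_t$ (for instance, when they equal $1$), the sum collapses to $\Pr{\X_{\mathcal{Q}}=\x_{\mathcal{Q}}|X_0=x}$. The ratio then equals the ratio in the pointwise influence exactly, and maximizing over $(x,x')$ gives the claim.

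In general the redaction probabilities outside $\mathcal{Q}$ do depend on $x_t$, and this does not collapse. For that case I would choose $\mathcal{Q}$ as the set of all revealed coordinates of a supported output, together with the always-redacted complement. On $\mathcal{P}^c$ the redaction probabilities equal $1$ and are independent of $x_t$, so the collapse argument applies.
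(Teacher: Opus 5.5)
\textbf{The upper bound is fine; the lower bound has a gap.} Your mixture with weights $c(\x_{\mathcal{P}},\y)$ independent of $x$ is the post-processing argument the paper leaves implicit. For the lower bound, the statement covers an \emph{arbitrary} $\mathcal{Q}\subseteq\mathcal{P}$ and an arbitrary LRM. So you may not reinterpret the hypothesis as ``the all-redacted completion of $\x_{\mathcal{Q}}$ lies in $\supp(\Y)$'', and you may not re-choose $\mathcal{Q}$. Your fallback only establishes the case $\mathcal{Q}=\mathcal{P}$ with $\x_{\mathcal{P}}$ fully revealed, and that case does not imply the general one.

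Here is a counterexample to the reduction. Take the chain of \cref{fig:privacy-utility} and let $\mathcal{Q}=\{2\}$. Redact $X_1=1$ always, redact $X_1=0$ with probability $\check{q}_1\in(0,1)$, and always release $X_2$.
\begin{itemize}
\item By Markovity, every fully revealing output $(0,x_2)$ yields only $\pwInfl{X_0\rightsquigarrow X_1=0}\approx 0.21$.
\item The lemma asserts leakage at least $\pwInfl{X_0\rightsquigarrow X_2=1}\approx 1.39$.
\item The remaining output $(\redact,1)$ carries the factor $\Pr{Y_1=\redact\mid X_0=x,X_2=1}$, and your argument never controls its ratio over $x$.
\end{itemize}
This single-sample case with partially redacted neighbours is exactly how the lemma is used. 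Examples are \cref{lma:optimal-mechanism-must-restrict-support} and the claim that no sample of $\mathcal{R}_{\mathrm{L}}$ can be revealed. So your reinterpretation does not match that usage either.

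\textbf{The missing idea is to leave the completion unfixed.} Let $\bar{\mathcal{Q}}\define\range{n}\setminus\mathcal{Q}$ and $P_x(\y_{\bar{\mathcal{Q}}})\define\Pr{\Y_{\bar{\mathcal{Q}}}=\y_{\bar{\mathcal{Q}}}\mid X_0=x,\X_{\mathcal{Q}}=\x_{\mathcal{Q}}}$. Given $\X_{\mathcal{Q}}$, $\Y_{\mathcal{Q}}$ is conditionally independent of $(X_0,\X_{\bar{\mathcal{Q}}},\Y_{\bar{\mathcal{Q}}})$. Hence every $\y$ with $\y_{\mathcal{Q}}=\x_{\mathcal{Q}}$ has likelihood ratio
\[
\frac{P_x(\y_{\bar{\mathcal{Q}}})}{P_{\bar{x}}(\y_{\bar{\mathcal{Q}}})}\cdot\frac{\Pr{\X_{\mathcal{Q}}=\x_{\mathcal{Q}}\mid X_0=x}}{\Pr{\X_{\mathcal{Q}}=\x_{\mathcal{Q}}\mid X_0=\bar{x}}}.
\]
Both $P_x$ and $P_{\bar{x}}$ sum to $1$. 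So for each $x$ there is some $\y_{\bar{\mathcal{Q}}}$ with $P_x(\y_{\bar{\mathcal{Q}}})>0$ that makes the first factor at least $1$; otherwise summing gives $1<1$. That $\y$ lies in $\supp(\Y)$, and maximizing over $x$ gives $\pwInfl{X_0\rightsquigarrow\X_{\mathcal{Q}}=\x_{\mathcal{Q}}}$.

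This normalization argument is the paper's proof. It handles always-released or $x_t$-dependent behaviour outside $\mathcal{Q}$ in one step. In the example above it simply picks whichever of $(\redact,1)$ and $(0,1)$ works.
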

The above properties follow from the Markovity and the definition of redaction mechanisms.

One can observe that the leakage is confined between the max-influence of released samples and the pointwise-influence of the revealed realizations.
Concretely, \cref{lma:zerohop_redaction} shows that no \ac{lrm} with leakage at most $\epsilon$ can reveal a sample from the region $\region[L]$.
In contrast, revealing all samples from region $\region[S]$ never exceeds the allowed leakage, which serves as a trivial baseline redaction mechanism, cf.~\cite{massny2025closeenough}.
Most interestingly, $\region[M]$ contains samples the disclosure of which, for some realizations, does not exceed the allowed leakage per se, but for other realizations does. The crux of designing a private mechanism boils down to ensuring indistinguishability between realizations of the former kind and the latter kind.

    \begin{proof}[Proof of \cref{lma:zerohop_redaction}]
        Suppose there exists $\y \in \supp(\Y)$ with $y_t \neq \redact$ for all $t \in \mathcal{Q}$. Define $\bar{\mathcal{Q}}=\range{n} \setminus \mathcal{Q}$. It holds:
        \begin{align*}
            & \Pr{\Y=\y | X_0=x}
            \\
            \overset{(a)}{=} & \Pr{\Y=\y | X_0=x, \X_\mathcal{Q}=\y_\mathcal{Q}}
            \Pr{\X_\mathcal{Q}=\y_\mathcal{Q} | X_0=x}
            \\
            \overset{(b)}{=} & \Pr{\Y_{\bar{\mathcal{Q}}}=\y_{\bar{\mathcal{Q}}} | X_0=x, \X_\mathcal{Q}=\y_\mathcal{Q}}
            \\
            & \cdot \Pr{\X_\mathcal{Q}=\y_\mathcal{Q} | X_0=x} \Pr{\Y_\mathcal{Q}=\y_\mathcal{Q} | \X_\mathcal{Q}=\y_\mathcal{Q}}
        \end{align*}
        where $(a)$ follows since $\Pr{Y_t=x_t | X_t=x}=0$ if $x \neq x_t$ for redaction mechanisms, and $(b)$ follows by the Markovity of the chain.
        Hence, we can conclude that
        \begin{align}
            \frac{\Pr{\Y=\y | X_0=x}}{\Pr{\Y=\y | X_0=\bar{x}}}
            =&\, \frac{
                \Pr{\Y_{\bar{\mathcal{Q}}}=\y_{\bar{\mathcal{Q}}} | X_0=x, \X_\mathcal{Q}=\y_\mathcal{Q}}
            }{
                \Pr{\Y_{\bar{\mathcal{Q}}}=\y_{\bar{\mathcal{Q}}} | X_0=\bar{x}, \X_\mathcal{Q}=\y_\mathcal{Q}}
            }
            \label{eq:term1}\\
            &\cdot \frac{ \Pr{\X_\mathcal{Q}=\y_\mathcal{Q} | X_0=x} }{ \Pr{\X_\mathcal{Q}=\y_\mathcal{Q} | X_0=\bar{x}} }.
            \label{eq:term2}
        \end{align}
        For every $x$, we can always choose a $\y$ with $\y_{\bar{\mathcal{Q}}}$ such that in \cref{eq:term1} it holds
        \begin{equation*}
            \frac{
                \Pr{\Y_{\bar{\mathcal{Q}}}=\y_{\bar{\mathcal{Q}}} | X_0=x, \X_\mathcal{Q}=\y_\mathcal{Q}}
            }{
                \Pr{\Y_{\bar{\mathcal{Q}}}=\y_{\bar{\mathcal{Q}}} | X_0=\bar{x}, \X_\mathcal{Q}=\y_\mathcal{Q}}
            }
            \geq 1.
        \end{equation*}
        Otherwise, we have a contradiction:
        \begin{align*}
            & \Pr{\Y_{\bar{\mathcal{Q}}}=\y_{\bar{\mathcal{Q}}} | X_0=x, \X_\mathcal{Q}=\y_\mathcal{Q}}
            \\
            &< \Pr{\Y_{\bar{\mathcal{Q}}}=\y_{\bar{\mathcal{Q}}} | X_0=\bar{x}, \X_\mathcal{Q}=\y_\mathcal{Q}} \forall_{\y_{\bar{\mathcal{Q}}}}
            \\
            \implies& 1 = \sum_{\y_{\bar{\mathcal{Q}}}} \Pr{\Y_{\bar{\mathcal{Q}}}=\y_{\bar{\mathcal{Q}}} | X_0=x, \X_\mathcal{Q}=\y_\mathcal{Q}}
            \\
            &< 
            \sum_{\y_{\bar{\mathcal{Q}}}} \Pr{\Y_{\bar{\mathcal{Q}}}=\y_{\bar{\mathcal{Q}}} | X_0=\bar{x}, \X_\mathcal{Q}=\y_\mathcal{Q}} = 1.
        \end{align*}
        Therefore,
        \begin{align*}
            \leak{X_0 \to \Y} =& \log \max_{\y,x} \frac{\Pr{\Y=\y | X_0=x}}{\Pr{\Y=\y | X_0=\bar{x}}} \\
            \geq& \log \max_{\y_{\mathcal{Q}},x} \frac{ \Pr{\X_\mathcal{Q}=\y_\mathcal{Q} | X_0=x} }{ \Pr{\X_\mathcal{Q}=\y_\mathcal{Q} | X_0=\bar{x}} } \\
            =& \pwInfl{X_0 \rightsquigarrow X_\mathcal{Q}=\y_\mathcal{Q}}.
        \end{align*}
    \end{proof}
    
    \begin{proof}[Proof of \cref{lma:optimal-mechanism-must-restrict-support}]
    Let $t \in [1,n]$ and $x_t \in \mathcal{X}$. Suppose $\Pr{Y_t=\redact | X_t=x_t} < 1$, but also $\Pr{Y_t=\redact | X_t=x_t} > 0$. Then there exists a $\y \in \supp(\Y)$ such that $y_t=x_t$. For this $\y$ it holds that
    \begin{align}
        &\Pr{\Y=\y|X_0=x} \nonumber\\
        =& \sum_{x_t^\prime} \Pr{\Y=\y|X_t=x_t^\prime,X_0=x} \Pr{X_t=x_t^\prime|X_0=x} \nonumber\\
        =& \Pr{\Y=\y|X_t=x_t,X_0=x} \Pr{X_t=x_t|X_0=x} \label{eq:output-distribution-without-redaction}
    \end{align}
    The last equality holds, because a local redaction mechanism has $\Pr{Y_t=x_t | X_t=x_t^\prime} = 0$ when $x_t^\prime \neq x_t$, and thus,
    \begin{align*}
        & \Pr{\Y=\y|X_t=x_t^\prime,X_0=x} \\
        =& \Pr{\Y_{-t}=\y_{-t}|X_t=x_t^\prime,X_0=x} \\
        &\quad\Pr{Y_t=x_t | X_t=x_t^\prime,X_0=x} \\
        =& \Pr{\Y_{-t}=\y_{-t}|X_t=x_t^\prime,X_0=x} \\
        &\quad\Pr{Y_t=x_t | X_t=x_t^\prime} = 0.
    \end{align*}
    As a result, the leakage is
    \begin{align*}
        \leak{X_0 \to \Y}
        =& \log \max_{\substack{x \in \mathcal{X},\\\y \in \supp(\Y)}} \frac{ \Pr{\Y=\y|X_0=x} }{ \Pr{\Y=\y|X_0=\bar{x}} } \\
        \geq& \log \max_{\substack{x \in \mathcal{X}, \\ \y_{-t} \in \supp(\Y_{-t}), \\ y_t \in \mathcal{X}}} \frac{ \Pr{\Y=\y|X_0=x} }{ \Pr{\Y=\y|X_0=\bar{x}} },
    \end{align*}
    where by \cref{eq:output-distribution-without-redaction}, it holds:
    \begin{align*}
        &\frac{ \Pr{\Y=\y|X_0=x} }{ \Pr{\Y=\y|X_0=\bar{x}} } \\
        =& \frac{ \Pr{\Y=\y|X_t=y_t,X_0=x} }{ \Pr{\Y=\y|X_t=y_t,X_0=\bar{x}} }
        \cdot\frac{ \Pr{X_t=x_t|X_0=x} }{ \Pr{X_t=x_t|X_0=\bar{x}} }\\
        =& \frac{ \Pr{\Y_{-t}=\y_{-t}|X_t=y_t,X_0=x} \Pr{Y_t=y_t|X_t=y_t} }{ \Pr{\Y_{-t}=\y_{-t}|X_t=y_t,X_0=\bar{x}} \Pr{Y_t=y_t|X_t=y_t} } \\
        & \cdot\frac{ \Pr{X_t=x_t|X_0=x} }{ \Pr{X_t=x_t|X_0=\bar{x}} } \\
        =& \frac{ \Pr{\Y_{-t}=\y_{-t}|X_t=y_t,X_0=x} }{ \Pr{\Y_{-t}=\y_{-t}|X_t=y_t,X_0=\bar{x}} } \\
        &\cdot \frac{ \Pr{X_t=x_t|X_0=x} }{ \Pr{X_t=x_t|X_0=\bar{x}} }.
    \end{align*}
    Consider another mechanism $\mathcal{M}^\prime$ with outputs $\Y^\prime$ and with the same distributions $\Pr{Y_i^\prime|X_i}=\Pr{Y_i|X_i}$ for $i \neq t$, but with $\Pr{Y_t^\prime=\redact} = 0$. Then, $\supp(\Y_{-t})=\supp(\Y^\prime_{-t})$. The leakage is
    \begin{align*}
        \leak{X_0 \to \Y^\prime}
        &= \log \max_{\substack{x \in \mathcal{X},\\ \y^\prime \in \supp(\Y^\prime)}} \frac{ \Pr{\Y^\prime=\y^\prime|X_0=x} }{ \Pr{\Y^\prime=\y^\prime|X_0=\bar{x}} } \\
        &= \log \max_{\substack{x \in \mathcal{X},\\ \y_{-t}^\prime \in \supp(\Y_{-t}^\prime),\\ y^\prime_t \in \mathcal{X}}} \frac{ \Pr{\Y^\prime=\y^\prime|X_0=x} }{ \Pr{\Y^\prime=\y^\prime|X_0=\bar{x}} },
    \end{align*}
    where, as before, \cref{eq:output-distribution-without-redaction} gives
    \begin{align*}
        & \frac{ \Pr{\Y^\prime=\y^\prime|X_0=x} }{ \Pr{\Y^\prime=\y^\prime|X_0=\bar{x}} } \\
        =& \frac{ \Pr{\Y^\prime=\y^\prime|X_t=y^\prime_t,X_0=x} }{ \Pr{\Y^\prime=\y^\prime|X_t=y^\prime_t,X_0=\bar{x}} }
        \cdot \frac{ \Pr{X_t=x_t|X_0=x} }{ \Pr{X_t=x_t|X_0=\bar{x}} } \\
        =& \frac{ \Pr{\Y^\prime_{-t}=\y^\prime_{-t}|X_t=y^\prime_t,X_0=x} }{ \Pr{\Y^\prime_{-t}=\y^\prime_{-t}|X_t=y^\prime_t,X_0=\bar{x}} } \\
        & \cdot\frac{ \Pr{X_t=x_t|X_0=x} }{ \Pr{X_t=x_t|X_0=\bar{x}} } \\
        =& \frac{ \Pr{\Y^\prime_{-t}=\y^\prime_{-t}|X_t=y^\prime_t,X_0=x} }{ \Pr{\Y^\prime_{-t}=\y^\prime_{-t}|X_t=y^\prime_t,X_0=\bar{x}} } \\
        &\cdot \frac{ \Pr{X_t=x_t|X_0=x} }{ \Pr{X_t=x_t|X_0=\bar{x}} }.
    \end{align*}
    It can be observed that $\leak{X_0 \to \Y^\prime} \leq \leak{X_0 \to \Y}$. Furthermore, $\Pr{Y^\prime_t=\redact} = 0 < \Pr{Y_t=\redact}$ implies that the utility $\nu^\prime$ of $\mathcal{M}^\prime$ is larger than the utility $\nu$ of $\mathcal{M}$. Thus, $\mathcal{M}^\prime$ achieves higher utility than $\mathcal{M}$ with at most the same leakage, which contradicts the fact that $\mathcal{M}$ has an optimal utility $\nu$.

    To prove the last statement, we recall the Markovian relation $X_0 - X_t - \Y_{[t+1,n]}$. Therefore, if a record $X_t$ is always released, i.e., $\Pr{Y_t=\redact}=0$, then $Y_t=X_t$ and, hence,
    \begin{align*}
        \frac{ \Pr{\Y=\y|X_0=x} }{ \Pr{\Y=\y|X_0=\bar{x}} }
        = \frac{ \Pr{\Y_\range{t}=\y_\range{t}|X_0=x} }{ \Pr{\Y_\range{t}=\y_\range{t}|X_0=\bar{x}} }.
    \end{align*}
    Since the leakage does not depend on $\Y_{[t+1,n]}$, the utility is maximized by setting $\Y_{[t+1,n]}=\X_{[t+1,n]}$, i.e., $\Pr{Y_s=\redact}=0$ for all $s > t$.
    \end{proof}

\section{Proof of \cref{lma:recurrence-relation}}
    \label{proof:recurrence-relation}
    Let $t\geq\tau$ and $\tau+1 \leq s \leq t$. Suppose $\y^{(t)} \in \supp(\Y)$. It holds
    \begin{align*}
        & \Pr{\Y_{[s,t+1]}=\y^{(t)}_{[s,t+1]} | X_{s-1}=\hat{x}_{s-1}} \\
            &\quad- e^{\leakval[s]} \Pr{\Y_{[s,t+1]}=\y^{(t)}_{[s,t+1]} | X_{s-1}=\check{x}_{s-1}} \leq 0
        \\
        \Leftrightarrow& \check{q}_{s} \Pr{\Y_{[s+1,t+1]}=\y^{(t)}_{[s+1,t+1]} | X_{s}=\check{x}_{s}} \\
        &\quad \cdot \big[ \Pr{X_{s}=\check{x}_{s}|X_{s-1}=\hat{x}_{s-1}} \\
        &\qquad\qquad- e^{\leakval[s]} \Pr{X_{s}=\check{x}_{s}|X_{s-1}=\check{x}_{s-1}} \big] \\
        +& \hat{q}_{s} \Pr{\Y_{[s+1,t+1]}=\y^{(t)}_{[s+1,t+1]} | X_{s}=\hat{x}_{s}} \\
        &\quad \cdot \big[ \Pr{X_{s}=\hat{x}_{s}|X_{s-1}=\hat{x}_{s-1}} \\
        &\qquad\qquad- e^{\leakval[s]} \Pr{X_{s}=\hat{x}_{s}|X_{s-1}=\check{x}_{s-1}} \big] \leq 0.
    \end{align*}
    
    If $\leakval[s] \geq \max_{x_{s}} \log\frac{\Pr{X_{s}=x_{s} | X_{s-1}=\hat{x}_{s-1}}}{\Pr{X_{s}=x_{s} | X_{s-1}=\check{x}_{s-1}}}$, then both summands are non-positive. Thus, any $\Pr{Y_{s}=\redact|X_{s}=\check{x}_{s}} \geq 0$ and $\Pr{Y_{s}=\redact|X_{s}=\hat{x}_{s}} \geq 0$  are feasible.
    
    Otherwise, we have in this case:
    \begin{align}
        \label{eq:Atau}
        &\Pr{X_{s}=\hat{x}_{s}|X_{s-1}=\hat{x}_{s-1}} \\\nonumber
        -& e^{\leakval[s]} \Pr{X_{s}=\hat{x}_{s}|X_{s-1}=\check{x}_{s-1}} > 0,
    \end{align}
    with $\hat{x}_{s} \define \arg\max_{x_{s}} \log\frac{\Pr{X_{s}=x_{s} | X_{s-1}=\hat{x}_{s-1}}}{\Pr{X_{s}=x_{s} | X_{s-1}=\check{x}_{s-1}}}$.
    This also implies
    \begin{align}
        \label{eq:Btau}
        & \Pr{X_s=\check{x}_s|X_{s-1}=\hat{x}_{s-1}} \\\nonumber
        -& e^{\leakval[s]} \Pr{X_{s}=\check{x}_{s}|X_{s-1}=\check{x}_{s-1}} < 0,
    \end{align}
    since $(\ref{eq:Atau})+(\ref{eq:Btau}) = 1-e^{\leakval[s]} \leq 0$.
    Therefore, we can reformulate the equation above as
    \begin{align}
        \nonumber
        & \check{q}_{s} \Pr{\Y_{[s+1,t+1]}=\y^{(t)}_{[s+1,t+1]} | X_{s}=\check{x}_{s}} \\\nonumber
        &\cdot \textstyle\frac{
            \Pr{X_{s}=\check{x}_{s}|X_{s-1}=\hat{x}_{s-1}} - e^{\leakval[s]} \Pr{X_{s}=\check{x}_{s}|X_{s-1}=\check{x}_{s-1}}
        }{
            \Pr{X_{s}=\hat{x}_{s}|X_{s-1}=\hat{x}_{s-1}} - e^{\leakval[s]} \Pr{X_{s}=\hat{x}_{s}|X_{s-1}=\check{x}_{s-1}}
        }
        \\\nonumber
        &+ \hat{q}_{s} \Pr{\Y_{[s+1,t+1]}=\y^{(t)}_{[s+1,t+1]} | X_{s}=\hat{x}_{s}} \leq 0
        \\\nonumber
        \Leftrightarrow&
        \hat{q}_{s} \Pr{\Y_{[s+1,t+1]}=\y^{(t)}_{[s+1,t+1]} | X_{s}=\hat{x}_{s}}
        \\\nonumber
        &- \check{q}_{s} \Pr{\Y_{[s+1,t+1]}=\y^{(t)}_{[s+1,t+1]} | X_{s}=\check{x}_{s}} \\\nonumber
        &\cdot \textstyle\frac{
            e^{\leakval[s]} \Pr{X_{s}=\check{x}_{s}|X_{s-1}=\check{x}_{s-1}} - \Pr{X_{s}=\check{x}_{s}|X_{s-1}=\hat{x}_{s-1}}
        }{
            \Pr{X_{s}=\hat{x}_{s}|X_{s-1}=\hat{x}_{s-1}} - e^{\leakval[s]} \Pr{X_{s}=\hat{x}_{s}|X_{s-1}=\check{x}_{s-1}}
        }
        \leq 0
        \\
        \nonumber
        \Leftrightarrow&
        \hat{q}_s \Pr{\Y_{[s+1,t+1]}=\y^{(t)}_{[s+1,t+1]} | X_{s}=\hat{x}_{s}} \\\nonumber
        &\quad- \check{q}_s e^{\gamma_s(\leakval[s])} \Pr{\Y_{[s+1,t+1]}=\y^{(t)}_{[s+1,t+1]} | X_{s}=\check{x}_{s}} \leq 0.
    \end{align}
    where
    $$\textstyle\gamma_s(\leakval[s]) \define \log \frac{
        e^{\leakval[s]} \Pr{X_{s}=\check{x}_{s}|X_{s-1}=\check{x}_{s-1}} - \Pr{X_{s}=\check{x}_{s}|X_{s-1}=\hat{x}_{s-1}}
    }{
        \Pr{X_{s}=\hat{x}_{s}|X_{s-1}=\hat{x}_{s-1}} - e^{\leakval[s]} \Pr{X_{s}=\hat{x}_{s}|X_{s-1}=\check{x}_{s-1}}
    }$$
    is well-defined since the argument of the logarithm is strictly positive, cf.~\cref{eq:Atau} and \cref{eq:Btau}.

%%%%%%%%%%%%%%%
\begin{figure*}[ht]
    \centering
    \subfloat[Sequential composition view: The mechanism design problem is to find a leakage sequence $\{\leakval\}_t$, where each $\leakval$ quantifies the relative leakage of $\Y_{[t,n]}$ w.r.t. the reference $X_{t-1}$, respectively.]{
        \resizebox{0.45\linewidth}{!}{\input{illustration_composition_view}}
        \label{fig:composition-view}
    }
    \hfill
    \subfloat[\ac{mdp} view: The mechanism design problem is to find an optimal sequence of actions $(a_t)_t$ to move between states $(s_t)_t$. Numerical algorithms quantize the states with resolution $\Delta$.]{
        \resizebox{0.45\linewidth}{!}{\input{illustration_mdp_view}}
        \label{fig:mdp-view}
    }
    \caption{Illustration of the perspectives on the mechanism design problem.}
\end{figure*}
%%%%%%%%%%%%%%%

\section{Full proof of \cref{thm:leakage-induced-mechanism}}
\label{sec:mdp-proof}
    We prove the more technical result below.
    \begin{theorem}
    \label{thm:mdp}
    Let $\leakval[\tau]=\epsilon$ and $\leakval[n+1]=0$
    and suppose $\{\leakval\}_{\tau+1 \leq t \leq n}$ are the solution to
    \begin{align*}
        P_1:
        \min_{\{\leakval\}_{\tau+1 \leq t \leq n}}
        \quad& \sum_{t \in [\tau,n]} (\hat{q}_t \Pr{X_t=\hat{x}_t}+\check{q}_t \Pr{X_t=\check{x}_t})
        \\
        \text{s.t.}\quad
        & 0 \leq \leakval \leq \tilde{\gamma}_{t-1}(\leakval[t-1]),\, \tau+1\leq t \leq n,
    \end{align*}
    An optimal binary \ac{lrm} with leakage $\epsilon$ is
    $\Pr{Y_t=\redact}=1$ for all $t \in [1,\tau-1]$ and
    \begin{align*}
        \Pr{Y_t=\redact|X_t=\hat{x}_t} &= \hat{q}_t \define \begin{cases}
            1 & \text{ if } \leakval < \ell_t \text{ or } t=\tau,\\
            0 & \text{ otherwise},
        \end{cases} \\
        \Pr{Y_t=\redact|X_t=\check{x}_t} &= \check{q}_t \define \begin{cases}
            e^{-(\gamma_t(\leakval)-\leakval[t+1])} & \text{ if } \leakval < \ell_t,\\
            0 & \text{ otherwise}.
        \end{cases}
    \end{align*}
    for $t \in [\tau,n]$.
    \end{theorem}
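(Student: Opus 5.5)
The argument has two halves. First, every feasible point of $P_0$ (after the reductions of \cref{lma:optimal-mechanism-must-restrict-support} and \cref{cor:optimal-mechanism-must-restrict-support}) corresponds to a feasible point of $P_1$ with the same objective. Second, every feasible point of $P_1$ yields, through the stated $\hat{q}_t,\check{q}_t$, a private \ac{lrm}. Together these show that the minimum of $P_1$ equals the optimal value of $P_0$, and that the induced mechanism attains it.

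By \cref{lma:optimal-mechanism-must-restrict-support} and \cref{eq:wlog_mechanism}, we restrict to mechanisms with redaction window $\range{w}$, where $\hat{q}_t=1$ for $t\le w$ and $\Pr{Y_t=\redact}=1$ for $t<\tau$ (by \cref{lma:zerohop_redaction}, no realization in $\region[L]$ may be revealed). For $t\ge\tau$ we define the local leakages recursively: $\leakval[\tau]=\epsilon$ and $e^{\leakval[t+1]}\define\check{q}_t e^{\gamma_t(\leakval)}$ whenever $\leakval<\ell_t$.

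\textbf{Base step ($t=\tau$).} The constraint $\Pr{\Y=\y^{(t)}|X_0=x}\le e^\epsilon\Pr{\Y=\y^{(t)}|X_0=\bar{x}}$ conditions on $X_\tau$. We split on $X_\tau\in\{\hat{x}_\tau,\check{x}_\tau\}$ exactly as in the proof of \cref{lma:recurrence-relation}, with $X_0$ in place of $X_{s-1}$. Both choices of $x$ have to be handled; the maximum over $x$ in \cref{eq:gamma-tau} is what makes the resulting inequality simultaneously valid for both. The sign conditions analogous to \cref{eq:Atau} and \cref{eq:Btau} follow from $\pwInfl{X_0\rightsquigarrow X_\tau=\hat{x}_\tau}>\epsilon\ge\pwInfl{X_0\rightsquigarrow X_\tau=\check{x}_\tau}$. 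This yields the equivalent condition
\[
\Pr{\Y_{[\tau+1,t+1]}=\y_{[\tau+1,t+1]}|X_\tau=\hat{x}_\tau}\le e^{\leakval[\tau+1]}\Pr{\Y_{[\tau+1,t+1]}=\y_{[\tau+1,t+1]}|X_\tau=\check{x}_\tau},
\]
which is exactly \cref{eq:recursion-start} at $s=\tau+1$. Since $\check{q}_\tau\le1$, this forces $\leakval[\tau+1]\le\gamma_\tau(\epsilon)=\tilde\gamma_\tau(\leakval[\tau])$. I expect this step to be the main obstacle: keeping both directions of $x$ and the case $t=\tau$ in $\tilde\gamma$ (where $\hat{q}_\tau=1$ always) consistent is the delicate part.

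\textbf{Induction ($s=\tau+1,\dots$).} Apply \cref{lma:recurrence-relation} repeatedly.
\begin{itemize}
\item If $\leakval[s]\ge\ell_s$, the constraints for every window ending at or after $s$ hold automatically. We may then set $\hat{q}_s=\check{q}_s=0$, and by \cref{lma:optimal-mechanism-must-restrict-support}(ii) all later samples are released. This matches the ``otherwise'' branches and $\tilde\gamma_s=\infty$.
\item If $\leakval[s]<\ell_s$, then \cref{eq:recursive-goal} with $\hat{q}_s=1$ is again of the form \cref{eq:recursion-start} with $\leakval[s+1]=\leakval[s]+\dots$ defined via $\check{q}_s=e^{-(\gamma_s(\leakval[s])-\leakval[s+1])}$. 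The requirement $0<\check{q}_s\le1$ is equivalent to $\leakval[s+1]\le\gamma_s(\leakval[s])$.
\end{itemize}
One subtlety is the lower bound $\leakval[s+1]\ge0$. At the terminal window $w$, outputs with both $x_{w+1}$ values must be admissible. The ratio for the two values of $X_{w+1}$ (under $X_w=\hat{x}_w$ versus $\check{x}_w$) is a pair of likelihoods summing to one, so at least one of them is at least $1$. Hence feasibility requires $e^{\leakval[w+1]}\ge1$, consistent with the convention $\leakval[n+1]=0$ and with the general lower bound $0$. A brief check of the same type, appealing to \cref{lma:zerohop_redaction}, should also give the lower bound for intermediate windows.

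\textbf{Conclusion.} Each step is an equivalence, so the constraint set of $P_0$ maps bijectively onto that of $P_1$ under $\check{q}_t\leftrightarrow\leakval[t+1]$. The objectives coincide term by term, since $\hat{q}_t\Pr{X_t=\hat{x}_t}+\check{q}_t\Pr{X_t=\check{x}_t}$ is exactly the redaction cost in $P_0$. It follows that the minimizer of $P_1$ induces an optimal \ac{lrm}, and the choice of $\hat{x}_t$ through \cref{eq:check-hat} is the only one for which the sign conditions, and hence the reformulation, hold.
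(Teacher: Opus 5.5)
Your route is the paper's: reduce the $X_0$-constraint at $X_\tau$ via $\gamma_\tau$, substitute $e^{\leakval[t+1]}\define\check{q}_t e^{\gamma_t(\leakval)}$, propagate with \cref{lma:recurrence-relation}, and split on $\leakval$ versus $\ell_t$. The step that does not hold as written is the claim that $\hat{q}_t=1$, for the realization $\hat{x}_t$ of \eqref{eq:check-hat}, whenever $\leakval<\ell_t$.

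\eqref{eq:wlog_mechanism} only says that \emph{some} realization is always redacted. Your closing reason (that \eqref{eq:check-hat} is the only labelling for which the sign conditions hold) does not settle which one. \eqref{eq:Atau} and \eqref{eq:Btau} involve only transition probabilities, and \cref{lma:recurrence-relation} holds for arbitrary $\hat{q}_t,\check{q}_t$. For instance, $\hat{q}_t=0$, $\check{q}_t=1$ satisfies \eqref{eq:recursive-goal} trivially. For $0<\hat{q}_t<1$ you could even continue your recursion with $e^{\leakval[t+1]}=\check{q}_t e^{\gamma_t(\leakval)}/\hat{q}_t$.

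What excludes these mechanisms is the constraint inherited from the previous level for outputs whose window ends at $t-1$: $\Pr{Y_t=y_t | X_{t-1}=\hat{x}_{t-1}}\le e^{\leakval}\Pr{Y_t=y_t | X_{t-1}=\check{x}_{t-1}}$ for every revealed $y_t$ in the support. If $\hat{x}_t$ is revealed with positive probability, the factor $1-\hat{q}_t$ cancels and
\[
\frac{\Pr{Y_t=\hat{x}_t | X_{t-1}=\hat{x}_{t-1}}}{\Pr{Y_t=\hat{x}_t | X_{t-1}=\check{x}_{t-1}}}
=\frac{\Pr{X_t=\hat{x}_t | X_{t-1}=\hat{x}_{t-1}}}{\Pr{X_t=\hat{x}_t | X_{t-1}=\check{x}_{t-1}}}
=e^{\ell_t}>e^{\leakval},
\]
which is a contradiction. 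Hence $\hat{q}_t=1$, and $t$ must lie in the redaction window. This is the argument the paper uses. Your proof needs it for the direction ``optimal \ac{lrm} $\Rightarrow$ leakage-induced form.''

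There are three smaller points. First, at an intermediate window end $t<w$, checking the two values of $X_{t+1}$ cannot give $\leakval[t+1]\ge 0$. Only $\check{x}_{t+1}$ is revealed there, and $\Lambda_{t+1}(\check{x}_{t+1})\le 0$, so that single constraint tolerates negative $\leakval[t+1]$. What works is to sum the level-$(t+1)$ constraints over all outputs of $\Y_{[t+1,n]}$, i.e.\ all windows $t'\ge t$, including the all-$\redact$ output. Both conditional laws sum to one, which forces $e^{\leakval[t+1]}\ge 1$. This is the summation trick inside the proof of \cref{lma:zerohop_redaction}, not its statement.

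Second, the correspondence between $P_0$ and $P_1$ is not a bijection. $P_0$ permits continued redaction after $\leakval\ge\ell_t$, so you get domination (every $P_0$-feasible point is matched or beaten by a $P_1$-feasible one) rather than equal objectives. That suffices for optimality.

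Third, for the converse ($P_1$-feasible $\Rightarrow$ private) you should record that revealed outputs $y_t=\check{x}_t$ meet the inherited constraint, since $\Lambda_t(\check{x}_t)\le 0\le\leakval$.
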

    
    The proof is inductive.
    The starting point is established by similar arguments as used in the proof of \cref{lma:recurrence-relation}.
    A mechanism has leakage $\epsilon$ if and only if
    \begin{equation*}
        \Pr{\Y=\y^{(t)} | X_0=x_0} - e^\epsilon \Pr{\Y=\y^{(t)} | X_0=\bar{x}_0} \leq 0
    \end{equation*}
    holds for all $x_0 \in \mathcal{X}$ and for all $t \geq \tau$ for which $\y^{(t)} \in \supp(\Y)$.
    We first, exploit the Markovity $X_0 \text{ -- } X_\tau \text{ -- } \Y$ to show that
    \begin{align}
    & \Pr{\Y=\y^{(t)} | X_0=x_0} - e^\epsilon \Pr{\Y=\y^{(t)} | X_0=\bar{x}_0} \leq 0
    \nonumber\\\nonumber
    \Leftrightarrow& \Pr{\Y=\y^{(t)}_{[\tau,t+1]} | X_\tau=\hat{x}_\tau} \\
    &\qquad\qquad- e^{\gamma_\tau(\leakval[\tau])} \Pr{\Y=\y^{(t)}_{[\tau,t+1]} | X_\tau=\check{x}_\tau} \leq 0.
    \label{eq:change-of-ref}
    \end{align}
    for $\leakval[\tau] = \epsilon$ and $\gamma_\tau(\leakval[\tau])$ as defined in \cref{eq:gamma-tau}.
    After that, we introduce the variable $\leakval[\tau+1]$ and transform the constraint as follows.
    By the conditional independence of $Y_\tau$ and $\Y_{[\tau+1,n]}$ given $X_\tau=x_\tau$, \cref{eq:change-of-ref} is equivalent to:
    \begin{align}
        & \Pr{\Y=\y^{(t)}_{[\tau+1,t+1]} | X_\tau=\hat{x}_\tau} \\
        \label{eq:epsilon-development}
        & - \underbrace{ \check{q}_\tau e^{\gamma_\tau(\leakval[\tau])} }_{{\define e^{\leakval[\tau+1]}}}
        \Pr{\Y=\y^{(t)}_{[\tau+1,t+1]} | X_\tau=\check{x}_\tau} \leq 0.
    \end{align}
    Here, we use that $\hat{q}_\tau=1$ necessarily.
    Changing the optimization variable from $\check{q}_\tau$ to $\leakval[\tau+1]$, we thus, have $\check{q}_\tau = e^{-(\gamma_\tau(\epsilon)-\leakval[\tau+1])}$. Since $0 < \check{q}_\tau \leq 1$, we require $\leakval[\tau+1] \leq \gamma_\tau(\leakval[\tau])$.
    Since the constraint in \cref{eq:epsilon-development} is infeasible for negative $\leakval[\tau+1]$, we can set $\leakval[\tau+1] \geq 0$ without loss of generality.
    
    The core induction goes as follows. For each $t=\tau+1,\dots,n$, we repeatedly apply \cref{lma:recurrence-relation} to \cref{eq:epsilon-development} (starting from $s=\tau+1$ up to $s=t$), which produces the constraint
    \begin{align}
        \nonumber
        &\hat{q}_t \Pr{Y_{t+1}=y^{(t)}_{t+1} | X_{t}=\hat{x}_{t}} \\
        \label{eq:repetition-end}
        &\quad- \check{q}_t e^{\gamma_t(\leakval)} \Pr{Y_{t+1}=y^{(t)}_{t+1} | X_{t}=\check{x}_{t}} \leq 0.
    \end{align}
    In every iteration $t$, we introduce a variable $\leakval[t+1]$ and determine $\check{q}_t$ and $\hat{q}_t$ as a function of $\leakval$ and $\leakval[t+1]$.
    
    For each $t \geq \tau+1$, if $\leakval \geq \log \frac{\Pr{X_t=\hat{x}_t | X_{t-1}=\hat{x}_{t-1}}}{\Pr{X_t=\hat{x}_t | X_{t-1}=\check{x}_{t-1}}}$, then $\y^{(t-1)}$ with $y^{(t-1)}_{t} = \hat{x}_{t}$ can be in $\supp(\Y)$. Thus, $\check{q}_t=\hat{q}_t=0$ is feasible and is the optimal choice to maximize utility.
    
    Otherwise, if $\leakval < \log\frac{\Pr{X_t=\hat{x}_t | X_{t-1}=\hat{x}_{t-1}}}{\Pr{X_t=\hat{x}_t | X_{t-1}=\check{x}_{t-1}}}$, then $\y^{(t-1)}$ with $y^{(t-1)}_{t} = \hat{x}_t$ cannot be in $\supp(\Y)$, as this would contradict \cref{lma:recurrence-relation}.
    Therefore, it must hold that $\hat{q}_t=1$ and $\check{q}_t$ is defined through $\gamma_t(\leakval)$ accordingly.
    Note that, for the same reason, $\check{q}_t > 0$ is necessary, as otherwise, the symbol $\redact$ would only be a renaming of $\hat{x}_t$ (thus, violating the leakage requirements again). Hence, we can define $\check{q}_t e^{\gamma_t(\leakval)} \define e^{\leakval[t+1]}$. Rewriting this, gives $\check{q}_t = e^{-(\gamma_t(\leakval)-\leakval[t+1])}$. As $\check{q}_t \leq 1$, we require $\leakval[t+1] \leq \gamma_t(\leakval)$.
    Since \cref{eq:repetition-end} is infeasible for $\leakval[t+1] < 0$, we use $\leakval[t+1] \geq 0$ without loss of optimality.
    
    %%%%%%%%%%%%%%
    Note that $\gamma_t(\leakval)$ is well-defined in the case $\leakval < \log\frac{\Pr{X_t=\hat{x}_t | X_{t-1}=\hat{x}_{t-1}}}{\Pr{X_t=\hat{x}_t | X_{t-1}=\check{x}_{t-1}}}$ and defines the feasible region for $\leakval[t+1]$. In the other case, in contrast, the feasible $\leakval[t+1]$ is unconstrained since the leakage requirement becomes independent from $\Y_{[t+1,n]}$, thus allowing for $\Pr{Y_{t+1}=\redact}=\cdots=\Pr{Y_{n}=\redact}=0$.

\section{Composition vs. MDP perspective}
\label{apx:illustration}

The design of a leakage-induced mechanism and its \ac{mdp} representation permit the following alternative views. A private leakage-induced mechanism according to \cref{thm:leakage-induced-mechanism} can be understood as a sequential composition of the leakage. The \ac{mdp} form can be understood as a sequence of decisions: whether to collect an immediate reward by revealing more about the current instance (allow low subsequent leakage $\leakval[t+1]$); or to maintain a high redaction probability in the hope to reveal more about future instances (allow high subsequent leakage $\leakval[t+1]$). The two perspectives are illustrated in \cref{fig:composition-view} and \cref{fig:mdp-view}, respectively.

\end{document}

%% file: macros.tex
\definecolor{asparagus}{rgb}{0.53, 0.66, 0.42}
\definecolor{bostonuniversityred}{rgb}{0.8, 0.0, 0.0}
\definecolor{bleudefrance}{rgb}{0.19, 0.55, 0.91}

\DeclareSymbolFont{bbold}{U}{bbold}{m}{n}
\DeclareSymbolFontAlphabet{\mathbbold}{bbold}

\DeclareMathOperator{\supp}{supp}

\newcommand{\M}{\mathsf{M}}
\newcommand{\X}{\mathbf{X}}
\newcommand{\x}{\mathbf{x}}
\newcommand{\Y}{\mathbf{Y}}
\newcommand{\y}{\mathbf{y}}

\renewcommand{\Pr}[1]{{\operatorname{Pr}\left( {#1} \right)}}
\newcommand{\range}[1]{{\left[ {#1} \right]}}
\newcommand{\card}[1]{{\left| {#1} \right|}}
\newcommand{\indicator}[1]{{\mathbbold{1}\!\left\{ {#1} \right\}}}

\newcommand{\Exp}[2][]{{\ensuremath{\mathbb{E}_{#1}\left[{#2}\right]}}}

\newcommand{\leak}[2][]{{\mathcal{L}_{#1}(#2)}}
\newcommand{\infl}[1]{{\mathcal{I}(#1)}}
\newcommand{\pwInfl}[1]{{i(#1)}}

\newcommand{\leakvec}{\boldsymbol{\varepsilon}}
\newcommand{\leakval}[1][t]{{\varepsilon_{{#1}}}}

\newcommand{\region}[1][]{\ensuremath{{\mathcal{R}_{\mathrm{#1}}}}}

\newcommand{\define}{\ensuremath{\triangleq}}

\newcommand{\redact}{\star}

%% file: mc_redaction_setting_X0.tex
\begin{tikzpicture}
    \tikzset{
        sample/.style={minimum width=1cm,inner sep=0}
    }
    \node[draw,circle,fill=white,sample] (xp) at (-2,0) {$X_0$};
    \node[sample] (x1) at (0,0) {$X_1$};
    \node (dots1) at (1,0) {$\dots$};
    \node[sample] (xt) at (2.5,0) {$X_t=x_t$};
    \node (dots2) at (4,0) {$\dots$};
    \node[sample] (xn) at (5,0) {$X_n$};
    \node[sample,black!30] (x-1) at (-3.5,0) {$X_{-1}$};
    \node[sample,black!30] (x-t) at (-4.8,0) {$\dots$};

    \node[draw,rectangle, font=\small] (m1) at (0,-1) {$\Pr{Y_1 | X_1}$};
    \node[draw,rectangle, font=\small] (mt) at (2.5,-1) {$\Pr{Y_t | X_t}$};
    \node[draw,rectangle, font=\small] (mn) at (5,-1) {$\Pr{Y_n | X_n}$};
    
    \node (y1) at (0,-2) {$Y_1 = \redact$};
    \node (yt) at (2.5,-2) {$Y_t = \redact$ or $Y_t=x_t$};
    \node (yn) at (5,-2) {$Y_n = X_n$};

    \draw (x-t) edge[->,black!50] (x-1);
    \draw (x-1) edge[->,black!50] (xp);
    \draw (xp) edge[->] (x1);
    \draw (x1) edge[-] (dots1);
    \draw (dots1) edge[->] (xt);
    \draw (xt) edge[-] (dots2);
    \draw (dots2) edge [->] (xn);
    
    \draw[->] (x1) edge (m1) (m1) edge (y1);
    \draw[->] (xt) edge (mt) (mt) edge (yt);
    \draw[->] (xn) edge (mn) (mn) edge (yn);

    \draw[dashed] (-1,-0.5) rectangle (6.5,-1.5);
    \node[anchor=east] (m) at (6.5,-1) {$\M$};

    \draw[dotted,thick] (-1.4,1) -- (-1.4,-2) ;
    \node[draw=none,anchor=east] at (-1.4,-1) {\footnotesize Sensitive history};
\end{tikzpicture}

%% file: illustration_composition_view.tex
\begin{tikzpicture}
    \tikzset{
        record/.style={draw=none,opacity=0.5},
        output/.style={draw=none},
        leakage/.style={->,dashed},
    }
    \node[record] (x0) at (0,0) {$X_0$};
    \node[draw=none] (dots) at (1,0) {$\cdots$};
    \node[record] (x1) at (2,0) {$X_{\tau}$};
    \node[output] (y1) at (2,2) {$Y_{\tau}$};
    \node[record] (x2) at (4,0) {$X_{\tau+1}$};
    \node[output] (y2) at (4,2) {$Y_{\tau+1}$};
    \node[record] (x3) at (6,0) {$X_{\tau+2}$};
    \node[output] (y3) at (6,2) {$Y_{\tau+2}$};
    \node[draw=none] (xend) at (7,0) {$\cdots$};
    \node[draw=none] (yend) at (7,2) {$\cdots$};

    \draw % correlations
        (x0) edge (dots)
        (dots) edge (x1)
        (x1) edge (x2)
        (x2) edge (x3)
        (x3) edge (xend);
    \draw[-stealth,opacity=0.5] % outputs
        (x1) edge (y1)
        (x2) edge (y2)
        (x3) edge (y3);
    \draw % \epsilon_\tau
        (1.5,2.5) -- (1.5,1.5) -- (7,1.5)
        (1.4,2) edge[leakage] node[pos=0.5,anchor=north west]{$\epsilon=\epsilon_\tau$} (x0);
    \draw % \epsilon_{\tau+1}
        (3.5,2.5) -- (3.5,1.65) -- (7,1.65)
        (3.4,2) edge[leakage] node[pos=0.5,anchor=north west]{$\epsilon_{\tau+1}$} (x1);
        \draw % \epsilon_{\tau+2}
        (5.5,2.5) -- (5.5,1.8) -- (7,1.8)
        (5.4,2) edge[leakage] node[pos=0.5,anchor=north west]{$\epsilon_{\tau+2}$} (x2);
\end{tikzpicture}

%% file: illustration_mdp_view.tex
\begin{tikzpicture}
    \tikzset{
        record/.style={draw=none},
        state/.style={draw=black,fill,circle, inner sep=1},
    }
    \begin{axis}[
        xmin=-1,
        xmax=10,
        ymin=-2,
        ymax=10,
        unit vector ratio=1 0.6,
        xlabel={$t$},
        ylabel={$\epsilon_t$},
        hide x axis,
        grid=major,
        %ytick distance=1,
        ytick={0,2,4,6,8},
        ytick style={dotted},
        yticklabels=\empty,
        extra y ticks={0,2,8},
        extra y tick labels={$0$,$\epsilon$,$\ell_n$},
        x grid style={line width=.2pt, draw=black!10},
    ]
        \node[record] (x0) at (axis cs:0,-1) {$X_0$};
        \node[record] (dots1) at (axis cs:1,-1) {$\cdots$};
        \node[record] (x1) at (axis cs:2,-1) {$X_\tau$};
        \node[record] (x2) at (axis cs:4,-1) {$X_{\tau+1}$};
        \node[record] (x3) at (axis cs:6,-1) {$X_{\tau+2}$};
        \node[record] (dots2) at (axis cs:7.5,-1) {$\cdots$};
        \node[record] (xn) at (axis cs:9,-1) {$X_n$};

        \node[state,label=above:{$s_\tau$}] (eps1) at (axis cs:2,2) {};
        \node[state,label=above:{$s_{\tau+1}$}] (eps2) at (axis cs:4,6) {};
        \node[state,label=above:{$s_{\tau+2}$}] (eps3) at (axis cs:6,4) {};
        \node[state,label=above:{$s_n$}] (epsn) at (axis cs:9,8) {};

        \draw[-stealth]
            (eps1) edge node[below] {$a_\tau$} (eps2)
            (eps2) edge node[below] {$a_{\tau+1}$} (eps3);
        \draw
            (eps3) edge[-] (axis cs:7,5.33)
            (axis cs:7,5.33) edge[-,dotted] (axis cs:8,6.66)
            (axis cs:8,6.66) edge[-stealth] (axis cs:9,8);
        \draw (axis cs:0,8) edge[<->] node[right] {$\Delta$} (axis cs:0,6);
    \end{axis}
\end{tikzpicture}

%% file: references.bib
@misc{google2022openDataEcosystem,
title={Building the most open data cloud ecosystem},
author={Kazmaier, Gerrit},
year={2022},
howpublished={\url{cloud.google.com/blog/products/data-analytics/building-most-open-data-cloud-all-data-all-source-any-platform}},
url={cloud.google.com/blog/products/data-analytics/building-most-open-data-cloud-all-data-all-source-any-platform},
note = {Accessed: 2026/07/23},
urldate={2026-07-23}
}

@article{aimeur2022data,
title={How data brokers endanger privacy},
author={Aimeur, Esma and Brassard, Gilles and Guo, Muxue},
journal={Transactions on Data Privacy},
year={2022}
}

@inproceedings{zufferey2023watchYourWatch,
author = {No{\'e} Zufferey and Mathias Humbert and Romain Tavenard and Kevin Huguenin},
title = {Watch your Watch: Inferring Personality Traits from Wearable Activity Trackers},
booktitle = {32nd USENIX Security Symposium},
year = {2023},
isbn = {978-1-939133-37-3},
}

@article{karatzoglou2018markovTrajectories,
author = {Karatzoglou, Antonios and Köhler, Dominik and Beigl, Michael},
title = {Semantic-Enhanced Multi-Dimensional Markov Chains on Semantic Trajectories for Predicting Future Locations},
journal = {Sensors},
volume = {18},
year = {2018},
number = {10},
article-number = {3582},
url = {https://www.mdpi.com/1424-8220/18/10/3582},
issn = {1424-8220},
doi = {10.3390/s18103582}
}

@article{liuLocationPrivacyIts2018,
title = {Location {{Privacy}} and {{Its Applications}}: {{A Systematic Study}}},
shorttitle = {Location {{Privacy}} and {{Its Applications}}},
author = {Liu, Bo and Zhou, Wanlei and Zhu, Tianqing and Gao, Longxiang and Xiang, Yong},
year = {2018},
journal = {IEEE Access},
volume = {6},
issn = {2169-3536},
doi = {10.1109/ACCESS.2018.2822260},
}

@article{chen2012markovTestFinancial,
title={TESTING FOR THE MARKOV PROPERTY IN TIMESERIES},
volume={28},
doi={10.1017/S0266466611000065},
number={1},
journal={Econometric Theory},
author={Chen, Bin and Hong, Yongmiao},
year={2012},
}

@article{torriti2014electricityDemandModel,
title = {A review of time use models of residential electricity demand},
journal = {Renewable and Sustainable Energy Reviews},
volume = {37},
year = {2014},
issn = {1364-0321},
doi = {https://doi.org/10.1016/j.rser.2014.05.034},
author = {Jacopo Torriti},
}

@article{li2018smartMetering,
author={Li, Simon and Khisti, Ashish and Mahajan, Aditya},
journal={IEEE Transactions on Information Theory}, 
title={Information-Theoretic Privacy for Smart Metering Systems with a Rechargeable Battery}, 
year={2018},
volume={64},
number={5},
doi={10.1109/TIT.2018.2809005}
}

@article{sankar2013smartMeterPrivacy,
author={Sankar, Lalitha and Rajagopalan, S. Raj and Mohajer, Soheil and Poor, H.V.},
journal={IEEE Transactions on Smart Grid}, 
title={Smart Meter Privacy: A Theoretical Framework}, 
year={2013},
volume={4},
number={2},
doi={10.1109/TSG.2012.2211046}
}

@inproceedings{lucaHowSocialRelationships2010,
title = {How Social Relationships Affect User Similarities},
booktitle = {Proceedings of the {{Social Recommender Systems Workshop}}, in Conjunction with the 2010 {{International Conference}} on {{Intelligent User Interfaces}}},
author = {Luca, Ernesto William De and Said, Alan and Albayrak, Sahin},
year = {2010},
}

@article{bonhardKnowingMeKnowing2006,
title = {'{{Knowing}} Me, Knowing You' --- {{Using}} Profiles and Social Networking to Improve Recommender Systems},
author = {Bonhard, P. and Sasse, M. A.},
year = {2006},
journal = {BT Technology Journal},
volume = {24},
number = {3},
issn = {1358-3948, 1573-1995},
doi = {10.1007/s10550-006-0080-3},
}

@inproceedings{bourtoule2021machineUnlearning,
author={Bourtoule, Lucas and Chandrasekaran, Varun and Choquette-Choo, Christopher A. and Jia, Hengrui and Travers, Adelin and Zhang, Baiwu and Lie, David and Papernot, Nicolas},
booktitle={2021 IEEE Symposium on Security and Privacy},
title={Machine Unlearning}, 
year={2021},
volume={},
number={},
doi={10.1109/SP40001.2021.00019}
}

@inproceedings{yinzhi2015makingSystemsForget,
author={Cao, Yinzhi and Yang, Junfeng},
booktitle={2015 IEEE Symposium on Security and Privacy}, 
title={Towards Making Systems Forget with Machine Unlearning}, 
year={2015},
volume={},
number={},
doi={10.1109/SP.2015.35}
}

@inproceedings{dworkDifferentialPrivacy2006,
title = {Differential {{Privacy}}},
booktitle = {Automata, {{Languages}} and {{Programming}}},
author = {Dwork, Cynthia},
editor = {Bugliesi, Michele and Preneel, Bart and Sassone, Vladimiro and Wegener, Ingo},
year = {2006},
series = {Lecture {{Notes}} in {{Computer Science}}},
publisher = {Springer},
address = {Berlin, Heidelberg},
doi = {10.1007/11787006_1},
isbn = {978-3-540-35908-1},
}

@inproceedings{duchiLocalPrivacyStatistical2013,
title = {Local {{Privacy}} and {{Statistical Minimax Rates}}},
booktitle = {2013 {{IEEE}} 54th {{Annual Symposium}} on {{Foundations}} of {{Computer Science}}},
author = {Duchi, John C. and Jordan, Michael I. and Wainwright, Martin J.},
year = {2013},
issn = {0272-5428},
doi = {10.1109/FOCS.2013.53},
}

@inproceedings{kiferNoFreeLunch2011,
title = {No Free Lunch in Data Privacy},
booktitle = {Proceedings of the 2011 {{ACM SIGMOD International Conference}} on {{Management}} of Data},
author = {Kifer, Daniel and Machanavajjhala, Ashwin},
year = {2011},
publisher = {Association for Computing Machinery},
doi = {10.1145/1989323.1989345},
isbn = {978-1-4503-0661-4},
}

@inproceedings{liuDependenceMakesYou2016,
title = {Dependence {{Makes You Vulnerable}}: {{Differential Privacy Under Dependent Tuples}}},
shorttitle = {Dependence {{Makes You Vulnerable}}},
booktitle = {Proceedings 2016 {{Network}} and {{Distributed System Security Symposium}}},
author = {Liu, Changchang and Chakraborty, Supriyo and Mittal, Prateek},
year = {2016},
publisher = {Internet Society},
doi = {10.14722/ndss.2016.23279},
isbn = {978-1-891562-41-9},
}

@article{kiferPufferfishFrameworkMathematical2014,
title = {Pufferfish: {{A}} Framework for Mathematical Privacy Definitions},
shorttitle = {Pufferfish},
author = {Kifer, Daniel and Machanavajjhala, Ashwin},
year = {2014},
journal = {ACM Transactions on Database Systems},
volume = {39},
number = {1},
issn = {0362-5915},
doi = {10.1145/2514689},
}

@inproceedings{yangBayesianDifferentialPrivacy2015,
title = {Bayesian {{Differential Privacy}} on {{Correlated Data}}},
booktitle = {Proceedings of the 2015 {{ACM SIGMOD International Conference}} on {{Management}} of {{Data}}},
author = {Yang, Bin and Sato, Issei and Nakagawa, Hiroshi},
year = {2015},
publisher = {Association for Computing Machinery},
doi = {10.1145/2723372.2747643},
isbn = {978-1-4503-2758-9},
}

@inproceedings{zhaoDependentDifferentialPrivacy2017,
title = {Dependent {{Differential Privacy}} for {{Correlated Data}}},
booktitle = {2017 {{IEEE Globecom Workshops}}},
author = {Zhao, Jun and Zhang, Junshan and Poor, H. Vincent},
year = {2017},
doi = {10.1109/GLOCOMW.2017.8269219},
}

@inproceedings{ghoshInferentialPrivacyGuarantees2017,
title = {Inferential Privacy Guarantees for Differentially Private Mechanisms},
booktitle = {8th Innovations in Theoretical Computer Science Conference},
author = {Ghosh, Arpita and Kleinberg, Robert},
editor = {Papadimitriou, Christos H.},
year = {2017},
series = {Leibniz International Proceedings in Informatics (Lipics)},
volume = {67},
publisher = {Schloss Dagstuhl--Leibniz-Zentrum fuer Informatik},
issn = {1868-8969},
doi = {10.4230/LIPIcs.ITCS.2017.9},
isbn = {978-3-95977-029-3},
}

@inproceedings{wang2025inferentiallyPrivate,
author = {Wang, Shuaiqi and Zheng, Shuran and Lin, Zinan and Fanti, Giulia and Wu, Zhiwei Steven},
title = {Inferentially-Private Private Information},
year = {2025},
isbn = {9798400712746},
publisher = {Association for Computing Machinery},
doi = {10.1145/3696410.3714702},
booktitle = {Proceedings of the ACM Web Conference 2025},
}

@article{chenCorrelatedNetworkData2014,
title = {Correlated Network Data Publication via Differential Privacy},
author = {Chen, Rui and Fung, Benjamin C. M. and Yu, Philip S. and Desai, Bipin C.},
year = {2014},
journal = {The VLDB Journal},
volume = {23},
number = {4},
issn = {0949-877X},
doi = {10.1007/s00778-013-0344-8},
}

@inproceedings{xiaoProtectingLocationsDifferential2015,
title = {Protecting {{Locations}} with {{Differential Privacy}} under {{Temporal Correlations}}},
booktitle = {Proceedings of the 22nd {{ACM SIGSAC Conference}} on {{Computer}} and {{Communications Security}}},
author = {Xiao, Yonghui and Xiong, Li},
year = {2015},
publisher = {Association for Computing Machinery},
doi = {10.1145/2810103.2813640},
isbn = {978-1-4503-3832-5},
}

@article{zhuCorrelatedDifferentialPrivacy2015,
title = {Correlated {{Differential Privacy}}: {{Hiding Information}} in {{Non-IID Data Set}}},
shorttitle = {Correlated {{Differential Privacy}}},
author = {Zhu, Tianqing and Xiong, Ping and Li, Gang and Zhou, Wanlei},
year = {2015},
journal = {IEEE Transactions on Information Forensics and Security},
volume = {10},
number = {2},
issn = {1556-6021},
doi = {10.1109/TIFS.2014.2368363},
}

@inproceedings{songPufferfishPrivacyMechanisms2017a,
title = {Pufferfish {{Privacy Mechanisms}} for {{Correlated Data}}},
booktitle = {Proceedings of the 2017 {{ACM International Conference}} on {{Management}} of {{Data}}},
author = {Song, Shuang and Wang, Yizhen and Chaudhuri, Kamalika},
year = {2017},
publisher = {ACM},
doi = {10.1145/3035918.3064025},
isbn = {978-1-4503-4197-4},
}

@inproceedings{songCompositionInferentialPrivacy2017,
author={Song, Shuang and Chaudhuri, Kamalika},
booktitle={2017 55th Annual Allerton Conference on Communication, Control, and Computing}, 
title={Composition properties of inferential privacy for time-series data}, 
year={2017},
volume={},
number={},
doi={10.1109/ALLERTON.2017.8262823}}

@inproceedings{chakrabartiOptimalLocalBayesian2022,
author = {Chakrabarti, Darshan and Gao, Jie and Saraf, Aditya and Schoenebeck, Grant and Yu, Fang-Yi},
title = {Optimal Local Bayesian Differential Privacy over Markov Chains},
year = {2022},
isbn = {9781450392136},
publisher = {International Foundation for Autonomous Agents and Multiagent Systems},
booktitle = {Proceedings of the 21st International Conference on Autonomous Agents and Multiagent Systems},
numpages = {3},
location = {Virtual Event, New Zealand},
}

@inproceedings{bai2026compositionpufferfishprivacy,
author = {Bai, Jiamu and He, Guanlin and Gu, Xin and Kifer, Daniel and Maeng, Kiwan},
title = {Composition for Pufferfish Privacy},
year = {2026},
issue_date = {June 2026},
publisher = {VLDB Endowment},
volume = {19},
no-number = {10},
issn = {2150-8097},
doi = {10.14778/3828612.3828632},
booktitle = {Proceedings of the VLDB Endowment},
numpages = {13}
}

@inproceedings{garg2020formalizingDataDeletion,
author="Garg, Sanjam
and Goldwasser, Shafi
and Vasudevan, Prashant Nalini",
editor="Canteaut, Anne
and Ishai, Yuval",
title="Formalizing Data Deletion in the Context of the Right to Be Forgotten",
booktitle="Advances in Cryptology -- EUROCRYPT 2020",
year="2020",
publisher="Springer International Publishing",
address="Cham",
}

@article{ye2021onOffCorrelation,
author={Ye, Fangwei and Naim, Carolina and Rouayheb, Salim El},
journal={IEEE Transactions on Information Theory}, 
title={ON-OFF Privacy in the Presence of Correlation}, 
year={2021},
volume={67},
number={11},
doi={10.1109/TIT.2021.3107944}
}

@article{yeMechanismsHidingSensitive2022,
title = {Mechanisms for {{Hiding Sensitive Genotypes With Information-Theoretic Privacy}}},
author = {Ye, Fangwei and Cho, Hyunghoon and El Rouayheb, Salim},
year = {2022},
journal = {IEEE Transactions on Information Theory},
volume = {68},
number = {6},
issn = {1557-9654},
doi = {10.1109/TIT.2022.3156276},
}

@inproceedings{ye2026perfectPrivacy,
author={Ye, Fangwei and Liu, Zonghong and Parag, Parimal and El Rouayheb, Salim},
booktitle={2026 IEEE International Symposium on Information Theory}, 
title={Perfect Privacy and Strong Stationary Times for Markovian Sources}, 
year={2026},
volume={},
number={},
doi={10.1109/ISIT62367.2026.11653917}
}

@inproceedings{massny2025closeenough,
author={Maßny, Luis and Bitar, Rawad and Ye, Fangwei and El Rouayheb, Salim},
booktitle={2025 IEEE Information Theory Workshop}, 
title={Between Close Enough to Reveal and Far Enough to Protect: a New Privacy Region for Correlated Data}, 
year={2025},
volume={},
number={},
doi={10.1109/ITW62417.2025.11240504}
}

@article{dworkAlgorithmicFoundationsDifferential2013,
title = {The {{Algorithmic Foundations}} of {{Differential Privacy}}},
author = {Dwork, Cynthia and Roth, Aaron},
year = {2013},
journal = {Foundations and Trends{\textregistered} in Theoretical Computer Science},
volume = {9},
number = {3-4},
issn = {1551-305X, 1551-3068},
doi = {10.1561/0400000042},
}

@article{bellman1954theory,
author    = {Bellman, Richard},
title     = {The Theory of Dynamic Programming},
journal   = {Bulletin of the American Mathematical Society},
volume    = {60},
number    = {6},
pages     = {503--515},
year      = {1954},
publisher = {American Mathematical Society},
doi       = {10.1090/S0002-9904-1954-09848-8}
}

@book{powell2011dynamicProgramming,
author={Powell, W. B.},
title={Approximate Dynamic Programming: Solving the Curses of Dimensionality},
year={2011},
publisher={John Wiley \& Sons, Incorporated},
address={Newark}
}

@article{miretti2021DynaProg,
title = {DynaProg: Deterministic Dynamic Programming solver for finite horizon multi-stage decision problems},
journal = {SoftwareX},
volume = {14},
year = {2021},
issn = {2352-7110},
doi = {https://doi.org/10.1016/j.softx.2021.100690},
url = {https://www.sciencedirect.com/science/article/pii/S2352711021000352},
author = {Federico Miretti and Daniela Misul and Ezio Spessa},
}
